\DeclareMathOperator*{\argmin}{arg\,min}
\theoremstyle{plain}
\newtheorem{thm}{Theorem}
\newtheorem{prop}[thm]{Proposition}
\newtheorem*{rem}{Remark}
\title{Towards AI-Native Fronthaul: Neural Compression for NextG Cloud RAN}
\author{
Chenghong~Bian, \IEEEmembership{Student Member,~IEEE},
Yulin~Shao, \IEEEmembership{Member,~IEEE},
Deniz~G{\"u}nd{\"u}z, \IEEEmembership{Fellow,~IEEE}
\thanks{C. Bian and D. G\"und\"uz are with the Department of Electrical and Electronic Engineering, Imperial College London (E-mails: \{c.bian22, d.gunduz\}@imperial.ac.uk).}
\thanks{Y. Shao is with the Department of Electrical and Electronic Engineering, The University of Hong Kong (E-mail: ylshao@hku.hk).}
}
\begin{document}

\maketitle
\begin{abstract}
The rapid growth of data traffic and the emerging AI-native wireless architectures in NextG cellular systems place new demands on the fronthaul links of Cloud Radio Access Networks (C-RAN). In this paper, we investigate neural compression techniques for the Common Public Radio Interface (CPRI), aiming to reduce the fronthaul bandwidth while preserving signal quality.
We introduce two deep learning-based compression algorithms designed to optimize the transformation of wireless signals into bit sequences for CPRI transmission. The first algorithm utilizes a non-linear transformation coupled with scalar/vector quantization based on a learned codebook. The second algorithm generates a latent vector transformed into a variable-length output bit sequence via arithmetic encoding, guided by the predicted probability distribution of each latent element. Novel techniques such as a shared weight model for storage-limited devices and a successive refinement model for managing multiple CPRI links with varying Quality of Service (QoS) are proposed. Extensive simulation results demonstrate notable Error Vector Magnitude (EVM) gains with improved rate-distortion performance for both algorithms compared to traditional methods. 
The proposed solutions are robust to variations in channel conditions, modulation formats, and noise levels, highlighting their potential for enabling efficient and scalable fronthaul in NextG AI-native networks, aligning with the current 3GPP research directions.
\end{abstract}

\begin{IEEEkeywords}
C-RAN, AI-native fronthaul, CPRI compression, vector quantization, VQ-VAE, variable-rate compression.
\end{IEEEkeywords}

\section{Introduction}
\label{sec:intro}

{ As mobile networks evolve toward the 6G and NextG era, characterized by ubiquitous intelligence and ultra-dense deployments, the pressure on fronthaul capacity becomes increasingly critical. In particular, the demand for real-time, high-throughput transmission between distributed radio units and centralized baseband processing units, typified in Cloud Radio Access Network (C-RAN) architectures, poses significant challenges \cite{CRAN1}. C-RAN enhances spectral and energy efficiency by decoupling Remote Radio Heads (RRHs) from Baseband Units (BBUs), but relies on high-speed fronthaul links, such as the Common Public Radio Interface (CPRI) \cite{CPRI1,CPRI2}, to transport raw physical-layer signals. Without effective compression, this data exchange can quickly overwhelm fronthaul resources, hindering scalability and performance in future-generation systems. Thus, CPRI compression emerges as a key enabler for scalable, efficient C-RAN deployments aligned with 6G and NextG network design.}

In advancing the efficiency of signal transmission over the CPRI link, significant strides have been made through various research efforts. At the core of these innovations is the development of sophisticated CPRI compression techniques aimed at enhancing transmission efficiency and signal quality.
The foundational approach \cite{scalar_cpri} focuses on exploiting the inherent redundancy within time-domain signals, wherein the authors devise a method incorporating multi-rate filtering to pare down the numbers of samples to be transmitted. This process is complemented by the application of non-uniform scalar quantization, effectively translating each I/Q sample into a compressed bit representation, marking a pivotal step forward in CPRI link optimization.

\begin{figure}[t]
     \centering
     \begin{subfigure}{\columnwidth}
         \centering
         \includegraphics[width=0.75\columnwidth]{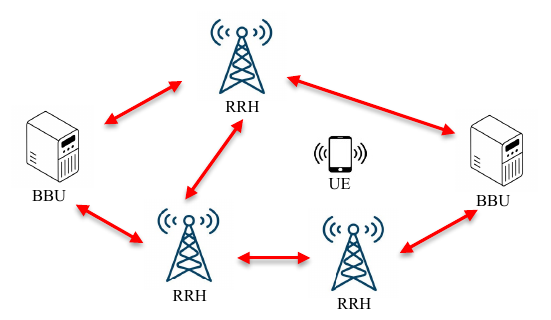}
         \caption{The connection of RRHs to BBUs via CPRI links within a C-RAN architecture.}

     \end{subfigure}     
     \begin{subfigure}{\columnwidth}
         \centering
         \includegraphics[width=\columnwidth]{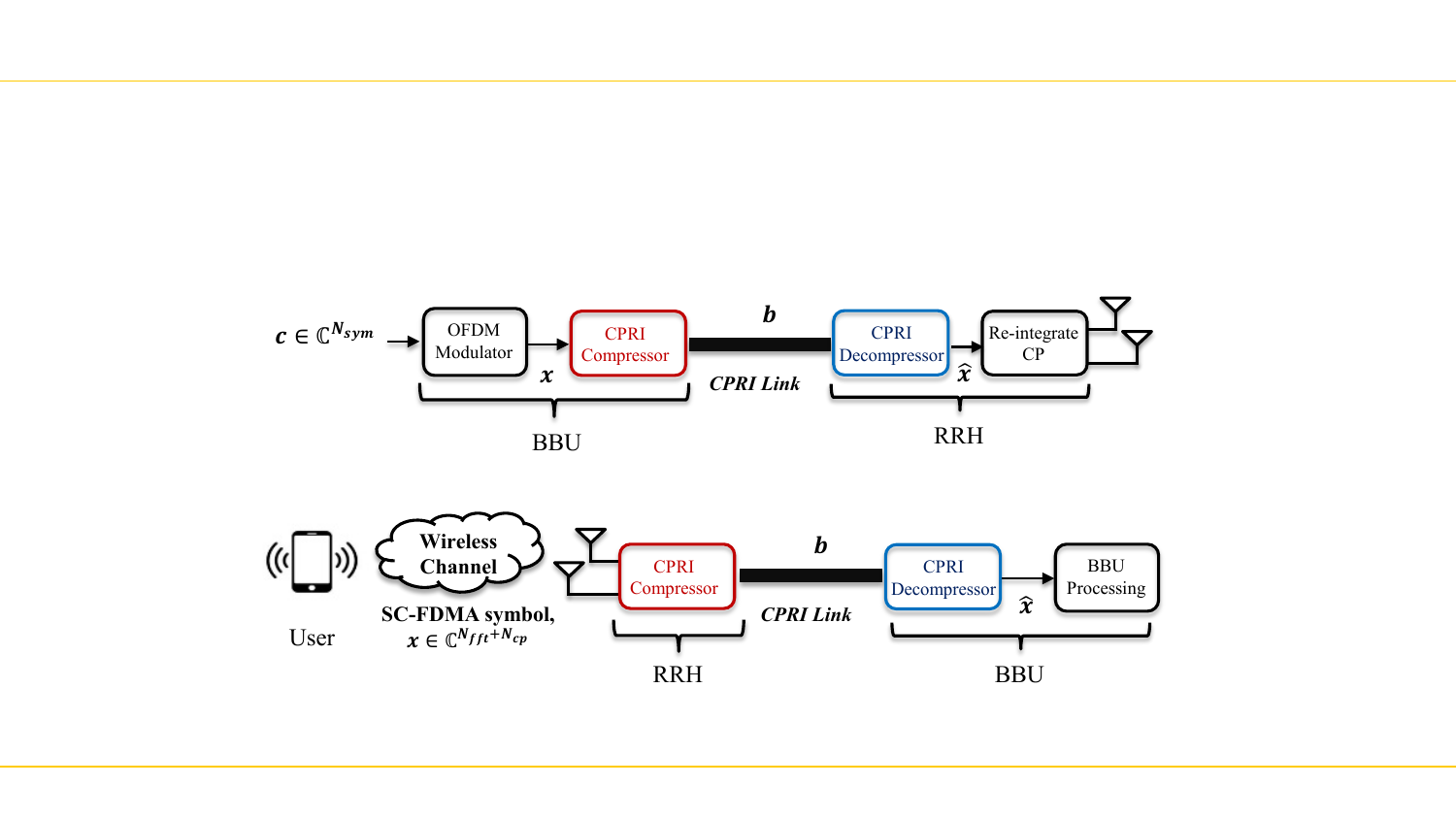}
         \caption{The processing of the BBU and RRH in the downlink scenario.}

     \end{subfigure} 
     
     \begin{subfigure}{\columnwidth}
         \centering
         \includegraphics[width=\columnwidth]{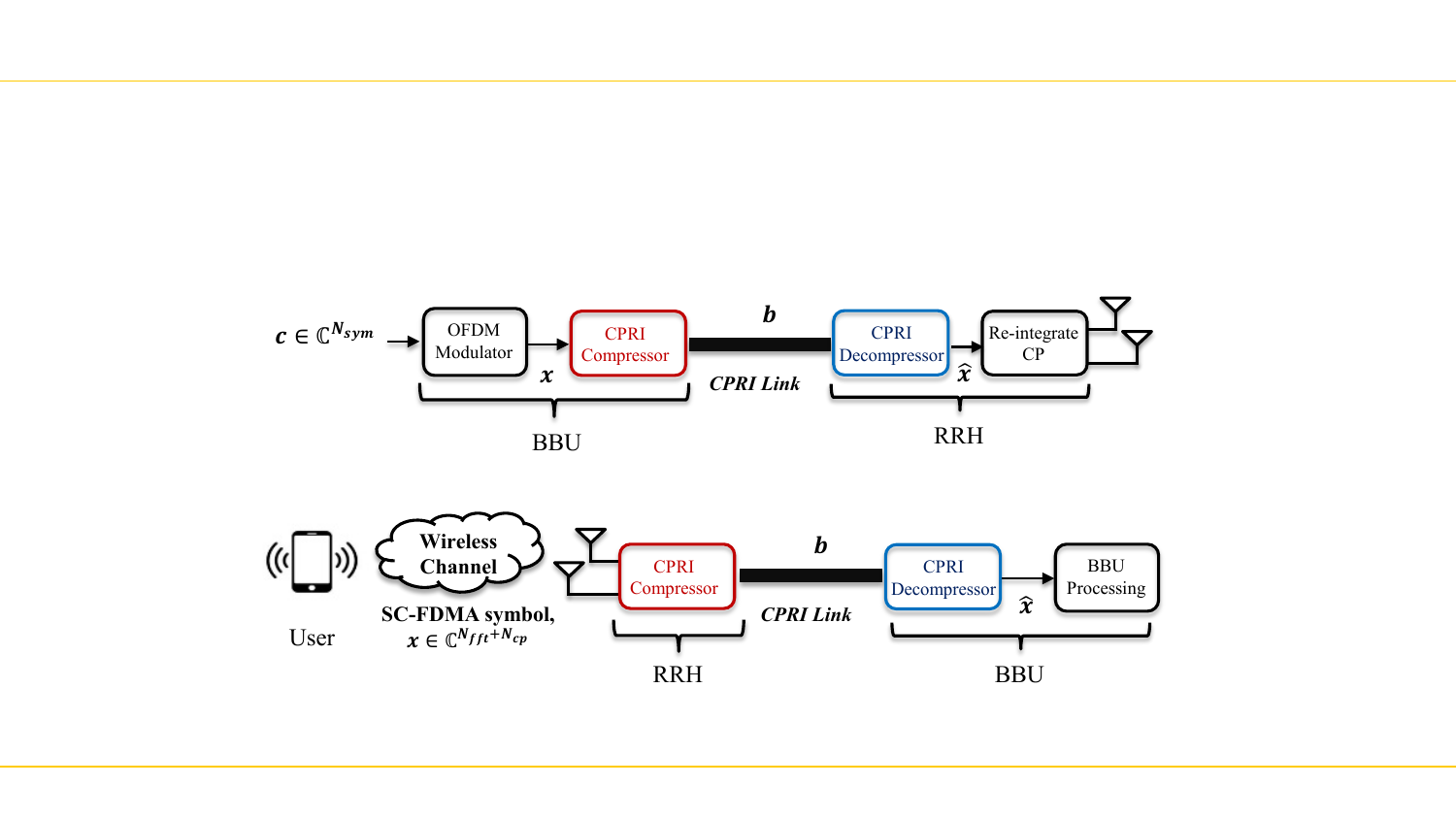}
         \caption{The uplink scenario.}

     \end{subfigure}

  \caption{Illustration of the CPRI link for both downlink and uplink scenarios.}
  \label{fig:fig_system}
\end{figure}

Building upon this groundwork, \cite{vq_cpri} introduces a vector quantization scheme. By organizing the input sequence into vectors, and mapping these to the nearest codewords within a specially curated codebook -- derived through a modified Lloyd algorithm -- this method significantly improves compression efficiency. Additionally, the integration of entropy coding further enhances this system, yielding better reconstruction performance and a notable advancement over the scalar quantization approach.
A filter with an error-feedback loop is deployed in \cite{noise_shape_cpri}, resulting in even more impressive compression performance. This approach relies on element-wise quantization of filter's output signal, which ingeniously shifts quantization noise away from OFDM subcarriers carrying vital information to those that do not, thereby improving reconstruction performance.

Expanding the scope of CPRI compression, researchers have also explored its application in distributed systems \cite{distributed_cpri, meta_cpri} and multiple-input multiple-output (MIMO) scenarios \cite{mimo_cpri}. These studies underscore the versatility and critical importance of CPRI compression techniques in refining the functionality and efficiency of C-RAN systems, setting a precedent for future innovations in 5G-Advanced and beyond.


The integration of deep learning (DL) into wireless communications in recent years has heralded a significant paradigm shift, demonstrating its efficacy in enhancing various aspects of the field, including signal processing \cite{mimo_det, dl_amp}, source coding \cite{balle2016end, balle2018variational}, channel coding \cite{dl_phy,GBAFC}, and joint source-channel coding \cite{deepjscc}. This pioneering research has effectively leveraged domain-specific insights to refine deep neural network (DNN) architectures and their training methodologies, underscoring the versatility and power of deep learning technologies in addressing complex communication challenges.

Driven by the substantial advancements facilitated by DL in wireless communications, researchers have developed machine learning based approach for CPRI compression \cite{meta_cpri, drl_cpri, deep_cpri}. 
In particular, the authors in \cite{meta_cpri} consider a scenario where the user is served by a central BBU cooperatively via multiple RRHs. Different transformation matrices at the RRHs are learned to perform dimension reduction/expansion followed by uniform quantization.
In \cite{drl_cpri}, the authors optimize the CPRI compression problem using a deep reinforcement learning approach under the latency and packet loss constraints. In our case, we consider the user is served by the BBU connected to a single RRH and utilize the EVM as the performance metric.
The most relevant work is  \cite{deep_cpri} which applies a DNN for non-linear transformation on a per-element basis, essentially functioning as a non-uniform scalar quantizer and achieves only marginal improvements over traditional methods, such as those described in \cite{scalar_cpri}. 

In this paper, we present two innovative neural compression schemes which extend and surpass the work identified in \cite{deep_cpri}. These schemes are designed to efficiently reduce the bit rate over CPRI links while ensuring high-quality signal reconstruction at the receiving end.
Specifically, both proposed schemes employ a non-linear transformation, executed through a recurrent neural network (RNN), to process the input signal into a compact latent vector. 
The first scheme applies scalar or vector quantization to this latent vector on an element-wise or block-wise basis, respectively.
The second scheme, on the other hand, rounds each element of the latent vector to the nearest integer value, and utilizes a specially designed probability prediction network to estimate the likelihood of each quantized element's occurrence. Then, an arithmetic encoder translates these probabilities into a compact bit sequence for efficient CPRI link transmission.

In addressing the practical deployment challenges of our proposed neural compression scheme, we focus on two critical scenarios: 1) the presence of multiple CPRI links between the RRH and BBU, and 2) the existence of storage limitations on these devices. Each scenario presents unique requirements and opportunities for optimization.
\begin{enumerate}[leftmargin=0.45cm]
    \item In the first scenario, the CPRI links exhibit varied levels of reliability (as indicated by their packet error rates) and successive refinement transmission is a natural solution -- the bit sequence for a coarse reconstruction is transmitted over the most reliable CPRI link, while supplementary refinement data is scheduled over less reliable links. To effectively implement this strategy, we have developed a specialized neural network architecture tailored for successive refinement. This design not only accommodates the diverse reliability of CPRI links but also ensures high-quality signal reconstruction.
    \item In the second scenario, where storage capacity on the devices is constrained, we propose a novel solution through the development of a unified compression model. This model is adept at navigating different rate-distortion (R-D) trade-offs, a capability previously explored in variable-rate image compression research \cite{modulated_ae, cui2021asymmetric, qsfactor, deepjscclpp}. Our architecture draws inspiration from significant works in the field, yet it distinguishes itself by its efficiency, specifically tailored for CPRI compression. This efficiency is pivotal in achieving the desired storage reduction without compromising on the quality of compression, thereby addressing a crucial need within future wireless communication systems.
\end{enumerate}


Our main contributions are summarized as follows:
\begin{itemize}[leftmargin=0.45cm]
    \item { We present two AI-native neural compression schemes that substantially advance traditional CPRI compression techniques, providing a principled framework for fronthaul bit-rate reduction in NextG Cloud RAN architectures.} By integrating non-linear transformation functions and employing unique latent quantization techniques, these schemes enhance the R-D performance, demonstrating a critical leap forward in compression efficiency.
    \item Through a statistical analysis of input signal distributions, we uncover the inherent correlation among signal elements, a revelation that challenges the adequacy of previous approaches and motivates our use of non-linear transformations. This insight directly informs the design of a successive refinement compression model tailored for environments with CPRI links of varying reliability, as well as a unified model that adeptly navigates storage constraints without sacrificing performance.
    \item Extensive numerical experiments validate the superior performance of our proposed schemes over traditional benchmarks, particularly in terms of error vector magnitude (EVM). Additionally, we demonstrate the robustness of our models under diverse conditions, including mismatched signal-to-noise ratios (SNRs), modulation formats, and filter taps. {The practical efficacy of both the unified model for storage optimization and the successive refinement strategy for leveraging link reliability further highlights the viability of our design for robust  6G fronthaul standardization.}
\end{itemize}


{\it Notations}: Throughout the paper, scalars are represented by normal-face letters (e.g., $x$),  while uppercase letters (e.g., $X$) represent random variables. Matrices and vectors are denoted by bold {upper} and {lower} case letters (e.g., $\bm{X}$ and $\bm{x}$), respectively. Transpose and Hermitian operators are denoted by $(\cdot)^\top$, $(\cdot)^\dagger$, respectively. $\Re(x)$ ($\Im(x)$) denotes the real (imaginary) part of {a complex variable} $x$. $\lceil x \rceil$ ($\lfloor x \rfloor$) denote the nearest integer which is larger (smaller) than $x$. We denote by $|\bm{x}|$ and $L_{\bm{x}}$ the magnitude and length of a vector $\bm{x}$, respectively. We will also use the notation $[N]$ to denote the set of positive integers up to $N$, i.e., $[N] = \{1, 2, \ldots, N\}$.

\begin{figure*}
     \centering
     \begin{subfigure}{1.2\columnwidth}
         \centering
         \includegraphics[width=\columnwidth]{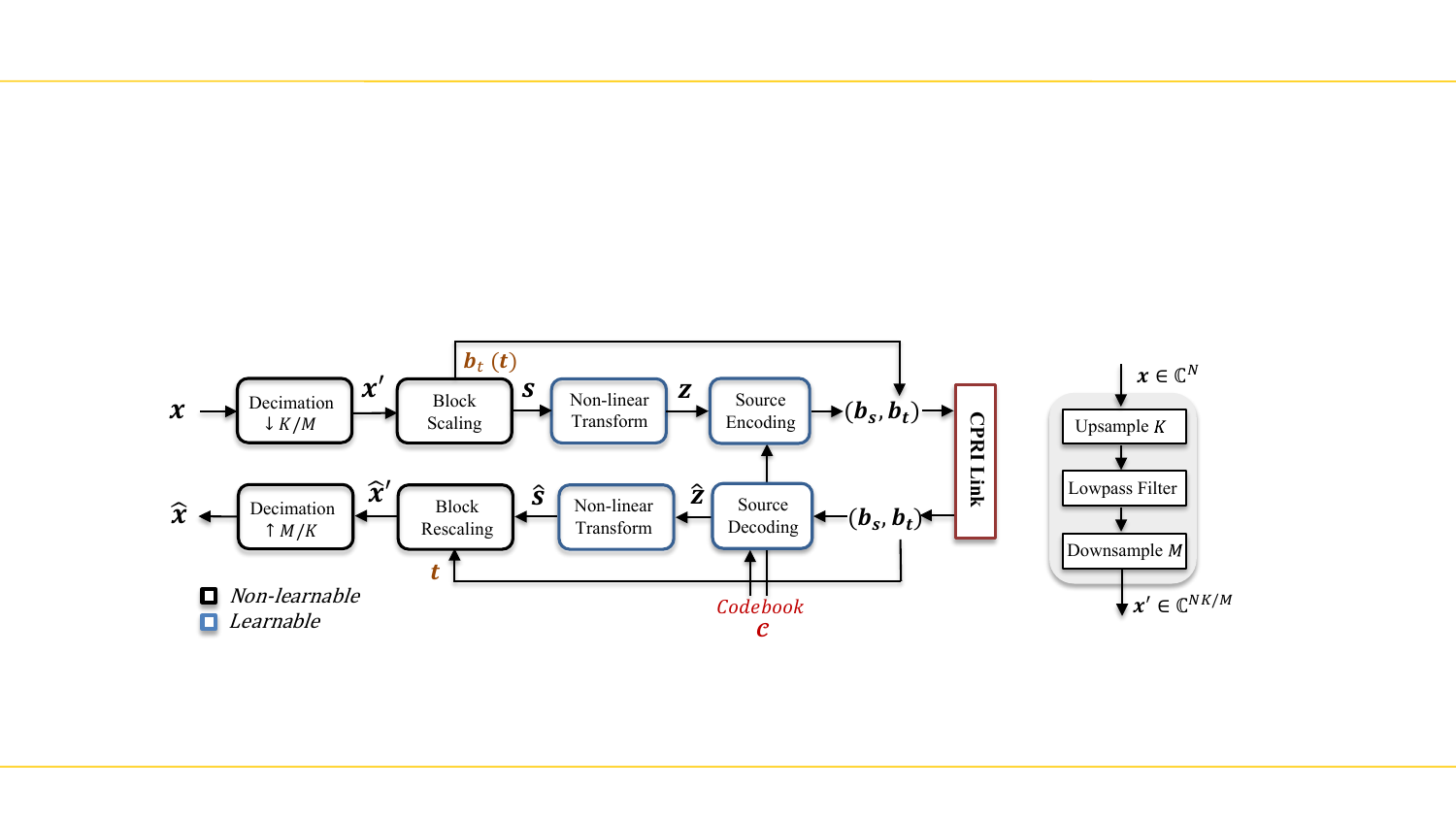}
         \caption{}
     \end{subfigure}
     \begin{subfigure}{0.34\columnwidth}
         \centering
         \includegraphics[width=\columnwidth]{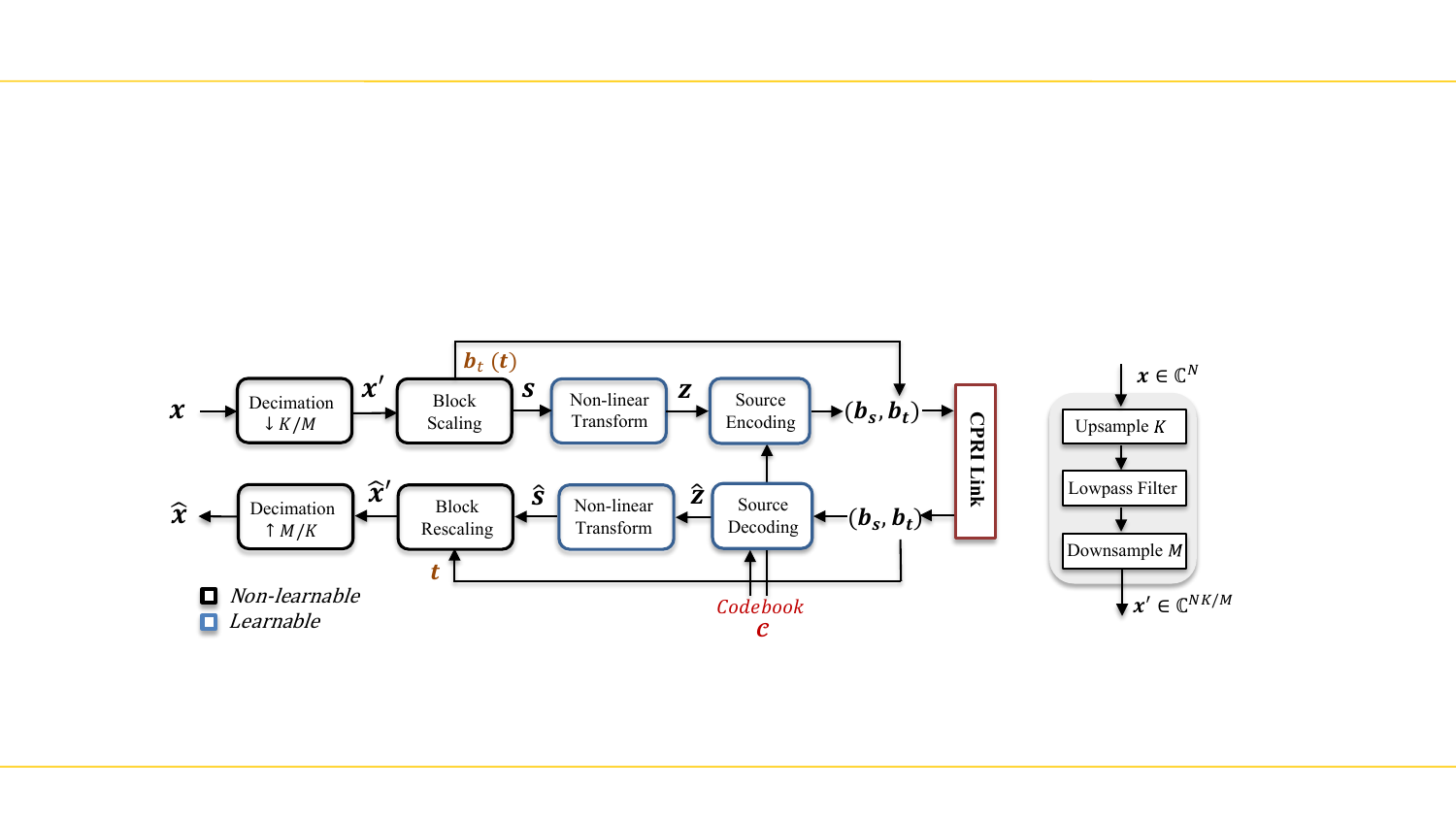}
         \caption{}
     \end{subfigure}

  \caption{(a) The flowchart of the unified CPRI compression procedure. In particular, the non-linear transformation only applies to the proposed neural compression algorithms and is parameterized by neural networks. The codebook, $\mathcal{C}$, can be obtained either by clustering (i.e., in the conventional schemes) or learning (i.e., the proposed latent quantization and neural compression schemes).
  (b) Details of the decimation module, where the input is a $N$-dimensional vector.}
\label{fig:flowchart}
\end{figure*}

\section{Problem Formulation}\label{sec:system_model}

\subsection{System model}
This section presents the model for signal compression over the CPRI link within the C-RAN architecture, addressing both uplink and downlink scenarios, as shown in Fig. \ref{fig:fig_system}.
The process begins with the generation of an information bit sequence. This bit sequence undergoes channel coding and modulation to produce Quadrature Amplitude Modulation (QAM) symbols, denoted by $\bm{c} \in \mathbb{C}^{N_{sym}}$, where $N_{sym}$ denotes the number of QAM symbols.

In the uplink scenario, mobile users utilize Single Carrier-Frequency Division Multiple Access (SC-FDMA): the QAM symbols, $\bm{c}$, are first converted to the frequency domain by an $N_{sym}$-point discrete Fourier transform (DFT). The resulting signal vector is then zero-padded to expand it into an $N_{fft}$-dimensional vector, which is subsequently transformed back to the time domain via an $N_{fft}$-point inverse DFT (IDFT).
A Cyclic Prefix (CP) consisting of $N_{cp}$ elements is added to this signal to combat the effects of multipath propagation and maintain subcarrier orthogonality across the transmission link.
After passing through the uplink channel, the signal received at the RRH, denoted by $\bm{x}\in \mathbb{C}^{N_{fft} + N_{cp}}$, will be forwarded to the CPRI compression module for efficient transmission over the CPRI link. 

In the downlink scenario, the BBU directly assigns the QAM symbols $\bm{c}$ to the OFDMA subcarriers, leaving $N_{\text{fft}} - N_{\text{sym}}$ unoccupied subcarriers as a guard band. Then, the QAM symbols are transformed to the time domain using an $N_{\text{fft}}$-point IDFT. 
Typically, the BBU appends a CP to these symbols before feeding them into the CPRI compression module for transmission to the RRH.
However, given the absence of channel noise in the downlink scenario, the CP merely replicates the final $N_{cp}$ elements of an OFDM symbol. Leveraging this redundancy, we can improve the compression efficiency by removing the CP prior to compression. This modification yields a CP-less signal, represented as $\bm{x} \in \mathbb{C}^{N_{fft}}$, that undergoes compression within the CPRI compression module. The CP will be re-integrated at the RRH, ensuring the signal's integrity during transmission while enhancing the overall compression efficiency over the CPRI link.

\subsection{CPRI compression}\label{sec:IIB}
The CPRI compression module at the RRH (BBU) for uplink (downlink) compression consists of four principal modules: decimation, block scaling, non-linear transformation, and source encoding.
The comprehensive workflow of the CPRI compression process is illustrated in Fig.~\ref{fig:flowchart}. In the following, we delve into the specifics of each module.

\subsubsection{Decimation}\label{sec:decimation}
The decimation module is introduced to reduce the time domain redundancy caused by the guard band which carries no information.
To this end, standard multi-rate filter is adopted. We first upsample the input signal, $\bm{x}$, by a factor of $K$. This is achieved by inserting zeros among samples:
\begin{equation}
    \bm{x}^{u}(j) = \left\{
    \begin{aligned}
    \bm{x}(n)  & , & j = Kn; \\
    0& , & j \neq Kn,
    \end{aligned}
    \right.
    \label{eq:upsample}
\end{equation}
and passing $\bm{x}^{u}$ through a low-pass filter to preserve the spectrum of interest while discarding the others, as zero insertion in the time domain leads to a periodic replication in the frequency domain.
Then, the filtered signal is downsampled by $M$ times, where $M>K$ such that the number of samples is reduced after the upsampling and downsampling processes. The output of the Decimation module, denoted by $\bm{x}^\prime\in\mathbb{C}^{N^\prime}$, where $N^\prime=(N_{fft}+N_{cp})K/M$ and $N^\prime=N_{fft}K/M$ for the uplink and downlink scenarios, respectively.

\subsubsection{Block Scaling}
The block scaling module scales the input signal $\bm{x}^\prime$ to limit the maximum value. 
To this end, $\bm{x}^\prime$ is first partitioned into multiple blocks where each block is comprised of $N_s$ complex symbols. For the $k$-th block, we first determine the scaling factor, $t_k$, as follows:
\begin{align}
    t^\prime_k = \max_{i\in [k N_s, (k+1)N_s)} \{|\Re(\bm{x}^\prime(i))|, |\Im(\bm{x}^\prime(i))|\},\notag
\end{align}
\begin{equation}
    t_k = \left\{
    \begin{aligned}
    \lceil t^\prime_k \rceil  & , & \lceil t^\prime_k \rceil \leq 2^{Q_s}-1, \\
    2^{Q_s}-1& , & \lceil t^\prime_k \rceil > 2^{Q_s}-1.
    \end{aligned}
    \right.
    \label{eq:scale_t}
\end{equation}
Note that $t_k$ is a positive integer, and can be represented using $Q_s$ bits. The corresponding scaled output, $\bm{s} = [\bm{s}_1^\top, \cdots, \bm{s}_{N_t}^\top]^\top$, is given by
\begin{align}
    \bm{s}_k = \bm{x}_k^\prime/t_k,
    \label{eq:scaled_s}
\end{align}
where $\bm{x}_k^\prime$ is the $k$-th block of $\bm{x}^\prime$ and $N_t \triangleq \lceil \frac{N^\prime}{N_s} \rceil$.

\subsubsection{Transformation and source encoding}
We next transform the scaled data $\bm{s}$ into a bit sequence. Existing methods in CPRI compression adopt scalar quantization or vector quantization approaches:
\begin{itemize}
    \item Scalar quantization {\cite{scalar_cpri}}. In this approach, each element of the input signal $\bm{s}$ is quantized individually. A one-dimensional quantization codebook, $\mathcal{C} \in \mathbb{R}^{2^Q}$ with $2^Q$ levels, is trained via an iterative algorithm \cite{quantization_book}. The I/Q components of the quantized symbols $\hat{\bm{s}}(i)$ are given by 
    \begin{align}
    \Re(\hat{{s}}(i)) = \argmin_n |\mathcal{C}_n - \Re({{s}}(i))|, \notag \\
    \Im(\hat{{s}}(i)) = \argmin_n |\mathcal{C}_n - \Im({{s}}(i))|.
    \label{eq:scalar_quantize}
    \end{align}
    Each I/Q component is then represented using $Q$ bits. 
    \item Vector quantization \cite{vq_cpri}. Instead of quantizing $\bm{s}$ element-wisely,
    vector quantization organizes the input signal $\bm{s}$ into blocks, where each block consists $b$ I/Q components. A specialized codebook $\mathcal{C} \in \mathbb{R}^{2^{bQ}\times b}$, where each vector component is quantized using $Q$ bits, is generated using a modified Lloyd algorithm \cite{vq_cpri}. This method not only retains the advantages of scalar quantization but also reduces redundancy significantly by considering the statistical dependencies among multiple components. To further improve the compression efficiency, this method assigns different number of bits to different entries of the codebook, as opposed to uniform bit assignment, where each entry of the codebook is represented by $bQ$ bits. This can be achieved by variable-length source coding algorithms, taking the frequency of entries into account.
\end{itemize}

The output of the transformation and source encoding modules is a bit sequence, $\bm{b_s}$, representing the quantized version of $\bm{s}$. 
This sequence is combined with $\bm{b_t}$, the bit representation of the scaling vector $\bm{t} \triangleq [t_1, \ldots, t_{N_t}]$ introduced in \eqref{eq:scale_t}, to form the complete bit sequence $\bm{b}$, which is then transmitted over the CPRI link. The compression ratio (CR) for this process is calculated as follows
\begin{equation}
\text{CR} = \frac{ L_{\bm{b}}}{2\times 15 \times N^\prime} = \frac{K(2\alpha Q N^\prime + Q_sN_t)/M}{30N^\prime}.
    \label{eq:eq_cr}  
\end{equation}
Here, the denominator represents the total number of bits required to encode the original signal without compression, where $2$ accounts for the I and Q components, and $15$ refers to the bit precision per I/Q component by convention \cite{scalar_cpri}. The numerator represents the total bits in the compressed signal $\bm{b}$, where
\begin{itemize}
    \item $\alpha \in (0, 1]$ denotes the reduction of bit cost from the source coding algorithm. 
    \item $Q$ is the number of bits per quantized I/Q component.
    \item $Q_s$ is the number of bits used to represent the scaling factor.
    \item $N_t$ is the total number of scaling factors applied.
    \item $N^\prime$ is the sequence length, adjusted for both uplink and downlink scenarios.
\end{itemize}

\subsubsection{Decoder}
The CPRI decompression process is a direct mirror of the compression sequence. Upon reception of the bit sequence $\bm{b}$ from the CPRI link, the decoder segments it into $\bm{b}_t$ and $\bm{b}_s$, which are used to reconstruct the scaling vector $\bm{t}$ and the quantized signal $\hat{\bm{s}}$, respectively. The block rescaling operation is conducted as follows
\begin{align}
    \hat{\bm{x}}^\prime_k &= \hat{\bm{s}}_k {t}_k, \notag \\
    \hat{\bm{x}}^\prime = [&\hat{\bm{x}}^{\prime \top}_1, \cdots, \hat{\bm{x}}^{\prime \top}_{N_t}]^\top.
\end{align}

Finally, we adopt a multi-rate filtering process where $\hat{\bm{x}}^\prime$ is first upsampled by a factor of $M$, filtered to retain only the necessary spectral components, and then downsampled by $K$, resulting in the final time-domain reconstruction, $\bm{\hat{x}}$. Note that in the downlink scenario, a CP is added to $\bm{\hat{x}}$ before transmission through the channel. In the uplink scenario, synchronization is necessary to locate the start of the SC-FDMA frame, followed by the removal of CP from $\bm{\hat{x}}$ to prepare for further processing.

As highlighted in \cite{vq_cpri}, instead of assessing the performance of CPRI compression algorithms solely in the time domain, it is crucial to evaluate the quality of signal reconstruction in the frequency domain, where the actual data transmission occurs. 
The frequency-domain representations of the original and reconstructed signals, denoted by $\bm{x}_f$ and $\hat{\bm{x}}_f$, respectively, are obtained by performing a DFT on their time-domain counterparts. The error vector magnitude (EVM) is defined as
\begin{equation}
    \text{EVM} (\%) = \sqrt{\frac{\sum_{n\in \mathbb{O}}|\hat{\bm{x}}_f(n) - {\bm{x}}_f(n)|^2}{\sum_{n\in \mathbb{O}} |{\bm{x}}_f(n)|^2}}\times 100\%,
    \label{eq:evm_fd}
\end{equation}
where $\mathbb{O}$ denotes the set of occupied subcarrier indices, with a cardinality equal to $N_{sym}$. The EVM can also be expressed in decibels (dB):
\begin{equation}
    \text{EVM} \, (\text{dB}) = 10 \log_{10}\Big(\frac{\sum_{n\in \mathbb{O}} |{\bm{x}}_f(n)|^2}{\sum_{n\in \mathbb{O}}|\hat{\bm{x}}_f(n) - {\bm{x}}_f(n)|^2}\Big).
    \label{eq:evm_fd_dB}
\end{equation}


\begin{figure}[!t]
\centering
\includegraphics[width=0.7\columnwidth]{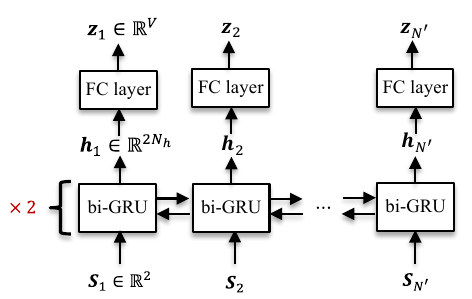}
\caption{The neural network architecture of the non-linear transformation function, $f_s(\cdot)$ at the encoder which transforms input signal $\bm{S}$ to the latent vector, $\bm{z}$.}
\label{fig:fig_NN}
\end{figure}

\section{Neural CPRI Compression with Latent Quantization}\label{sec:latent_quantize}
\subsection{Motivation}
The non-linear transformation and source encoding modules form the cornerstones of CPRI compression. As detailed in Section \ref{sec:IIB}, traditional solutions use a scalar or vector quantization scheme. However, both schemes rely on a unified codebook, wherein all the elements or blocks of elements from the input signal $\bm{s}$ are quantized in accordance with predefined levels. These schemes function by directly applying quantization to block-scaled data $\bm{s}$. Such methods achieve optimal results under the assumption that the elements or blocks being quantized are independent and identically distributed (i.i.d.). However, this assumption is invalid by analyzing the output of the decimation module, as demonstrated in Proposition \ref{prop:1}.

\begin{prop}\label{prop:1}
The output of the decimation module, $\bm{x}^\prime$, exhibits zero mean, i.e., $\mathbb{E}({x}^\prime_n) = 0$. The covariance for the elements of $\bm{x}^\prime$  is given by:
\begin{align}
\text{Cov}({x}^\prime_n, {x}^\prime_m) = \frac{P}{{N^\prime}} \sum_{k=k^*}^{N^\prime-k^*-1} e^{j\beta(n-m)k},
\end{align}
$\forall n, m \in [N^\prime], \beta \triangleq \frac{2\pi}{N^\prime}$, where $k^* > 0$ represents the number of OFDM subcarriers dedicated to the guard band, and $P$ is the average power of the QAM symbols.
\end{prop}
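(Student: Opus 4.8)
The plan is to reduce the statement to a direct second-order moment computation by first characterizing the net effect of the decimation module in the frequency domain. My claim is that, for an ideal low-pass filter, the cascade of upsampling by $K$, filtering, and downsampling by $M$ is equivalent to discarding the guard-band subcarriers and performing an $N'$-point IDFT on the $N_{sym} = N' - 2k^*$ occupied subcarriers. Concretely, I would write
\[
    x'_n = \frac{1}{\sqrt{N'}}\sum_{k=k^*}^{N'-k^*-1} C_k \, e^{j\beta n k}, \qquad \beta = \frac{2\pi}{N'},
\]
where $\{C_k\}$ are the QAM symbols placed on the occupied subcarriers, modelled as zero-mean, mutually uncorrelated, and of equal power $\mathbb{E}[|C_k|^2]=P$ (the standard assumption for independent QAM constellations). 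Once this representation is established, the proposition follows purely from the statistics of $\{C_k\}$.

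Given this representation, the zero-mean claim is immediate: linearity of expectation together with $\mathbb{E}[C_k]=0$ yields $\mathbb{E}[x'_n]=0$ for every $n$. For the covariance, since both means vanish I would evaluate $\text{Cov}(x'_n,x'_m)=\mathbb{E}[x'_n (x'_m)^*]$ by expanding the product of the two sums,
\begin{align*}
    \text{Cov}(x'_n, x'_m) &= \frac{1}{N'}\sum_{k,l=k^*}^{N'-k^*-1} \mathbb{E}[C_k C_l^*]\, e^{j\beta(nk - ml)} \\
    &= \frac{P}{N'}\sum_{k=k^*}^{N'-k^*-1} e^{j\beta(n-m)k},
\end{align*}
where the second line uses $\mathbb{E}[C_k C_l^*]=P\,\delta_{kl}$ to collapse the double sum onto its diagonal. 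Note that the conjugate in the definition is essential: the pseudo-covariance $\mathbb{E}[x'_n x'_m]$ would involve $\mathbb{E}[C_k C_l]$, which vanishes for proper QAM constellations. The resulting expression depends on $n-m$ only, so $\bm{x}'$ is wide-sense stationary but manifestly \emph{not} white, and these off-diagonal correlations are exactly what invalidate the i.i.d.\ premise of scalar/vector quantization and motivate the non-linear transform.

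I expect the genuine work to lie entirely in justifying the frequency-domain representation above, i.e.\ in showing that the multi-rate decimation chain acts as a clean subcarrier selection followed by an $N'$-point IDFT. The delicate points are: (i) arguing that with an ideal brick-wall low-pass filter of the appropriate cutoff the spectral images created by upsampling are perfectly suppressed and no aliasing is introduced by the downsampling, so that only the occupied band survives; (ii) the index bookkeeping that maps the occupied subcarriers of the original $N_{fft}$-point grid onto the contiguous index set $\{k^*,\dots,N'-k^*-1\}$ of the reduced $N'$-point grid, which is where the guard-band parameter $k^*$ enters; and (iii) tracking the normalization constants through the rate change so that the prefactor emerges as $P/N'$. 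The moment computation itself is routine once these points are settled; the only statistical input required is the whiteness and equal power of the QAM symbols, which I would record as a standing assumption on the constellation.
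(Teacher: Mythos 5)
Your proposal is correct and follows essentially the same route as the paper: the paper likewise asserts (by design of the filter and the choice of $K$, $M$) that the decimation output has spectrum $[\bm{0}_{k^*}^\top, \bm{c}^\top, \bm{0}_{k^*}^\top]^\top$, writes $x'_n$ as the $N'$-point IDFT over the occupied band, and then performs the identical mean and covariance computation using $\mathbb{E}[w_{k_1}^\dagger w_{k_2}] = P\,\delta_{k_1,k_2}$. The frequency-domain representation you flag as the ``genuine work'' is treated in the paper exactly as you treat it, namely as a standing design property of the multi-rate chain rather than something proved in detail.
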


\begin{proof}
See Appendix \ref{sec:APPA}.
\end{proof}

Proposition \ref{prop:1} reveals that the individual elements (or blocks) generated by the decimation module do not adhere to the i.i.d. assumption, nor does its block-scaled version, $\bm{s}$.
Therefore, the scalar and vector quantization approaches in \cite{scalar_cpri} and \cite{vq_cpri} are sub-optimal. 
To address this suboptimality, our idea is to employ a learnable non-linear transformation, detailed in Section \ref{sec:IIIB}, to transform the scaled data $\bm{s}$ into a latent space where the elements are more amenable to quantization.
Following the transformation, Section \ref{sec:III_quan} discusses the specific techniques used to quantize this neural transformed latent data. Different quantization strategies are explored to optimize the compression of the transformed data while maintaining the integrity and quality of the reconstructed signals.

Building on the above foundation, Section \ref{sec:neural_compress} advances the discussion by detailing how the compression model employs a probabilistic learning approach. This approach involves predicting the probability distribution of the latent data, enabling more efficient and adaptive compression.

\subsection{Learning-based Non-linear Transformations}\label{sec:IIIB}
Considering that $\bm{s}$ is a time-domain sequence, we utilize recurrent neural networks (RNNs) to parameterize non-linear transformation functions. Specifically, we employ bi-directional Gated Recurrent Units (bi-GRUs) \cite{GRU}, followed by fully connected layers, to facilitate this transformation.

To illustrate this process using the CPRI compression scenario for uplink (which is similarly applicable to downlink), let us first consider the input signal, $\bm{s} \in \mathbb{C}^{N^\prime}$. This signal is initially transformed into a real-valued matrix, $\bm{S} \in \mathbb{R}^{2 \times N^\prime}$, represented as
\begin{equation}
    \bm{S} = \begin{bmatrix} \Re(\bm{s}^\top) \\ \Im(\bm{s}^\top) \end{bmatrix}.
    \label{eq:big_S}
\end{equation}
Subsequently, $\bm{S}$ undergoes a non-linear transformation through the function $f_s(\cdot)$, generating a latent representation, $\bm{z} \in \mathbb{R}^{V N^\prime}$: 
\begin{equation}
    \bm{z} = f_s(\bm{S}).
    \label{eq:latent_z}
\end{equation}

The neural network architecture for the transformation function $f_s(\cdot)$ is depicted in Fig. \ref{fig:fig_NN}. This structure incorporates a two-layer bi-GRU with each layer having a hidden dimension of $N_h$. At each time step $j$, the input to the bi-GRU module, denoted by $\bm{S}_j$, is a two-dimensional vector corresponding to the $j$-th column of $\bm{S}$.

The output from the bi-GRU at the $j$-th step, $\bm{h}_j$, is a vector of length $2N_h$, which results from concatenating the outputs of the forward and backward passes, each contributing a vector of length $N_h$. This concatenated vector $\bm{h}_j$ is then processed by a fully connected (FC) layer, which transforms it into the latent vector $\bm{z}_j$ of dimension $V$. The final latent representation $\bm{z}$ for the entire sequence is assembled by concatenating each $\bm{z}_j$ across all time steps:
\begin{equation}
    \bm{z} = [\bm{z}_1^\top, \ldots, \bm{z}_{N^\prime}^\top]^\top.
    \label{eq:eq_latent_z}
\end{equation}

Given the latent $\bm{z}$, we quantize it (detailed later in Section \ref{sec:III_quan}) and obtain the quantized form $\hat{\bm{z}}$.
The corresponding bit sequence is then transmitted to the destination over the CPRI link. Upon receipt, the destination recovers $\hat{\bm{z}}$ and feeds it into the latent decoding function, $g_d(\cdot)$, which is designed to reconstruct $\bm{S}$. Note that the neural network architecture for $g_d(\cdot)$ mirrors that of $f_s(\cdot)$ with the exception of the input dimension, which is $V$ for $g_d(\cdot)$, as opposed to two.

During the decoding process, for each $j$-th time slot, $\bm{\hat{z}}_j$ is fed into the decoder. The bi-GRU then generates a hidden vector $\bm{\hat{h}}_j$ of size $2N_h$. This vector is subsequently processed by a FC layer that outputs a length-2 vector $\bm{\hat{s}}_j$. The complete reconstructed real-valued matrix, $\bm{\hat{S}}$, is compiled by stacking these vectors sequentially
\begin{equation}
    \bm{\hat{S}} = [\bm{\hat{s}}_1, \ldots, \bm{\hat{s}}_{N^\prime}].
\end{equation}
The complex reconstructed signal, denoted by $\hat{\bm{s}}$, is then formed by taking the first row of $\bm{\hat{S}}$ as the real component and the second row as the imaginary component.

\subsection{Quantization}\label{sec:III_quan}
In this section, we detail the process of quantizing the latent vector $\bm{z}$ to derive its quantized counterpart $\hat{\bm{z}}$.

\subsubsection{Uniform Quantization}
We first consider uniform quantization of the latent vector. We start by clipping each element of the output from the latent encoding function within the range $[0, 1]$:
\begin{equation}
    \bm{z} = \text{clip}(f_s(\bm{S}), 0, 1).
\end{equation}

We then assume that each component of $\bm{z}$ is i.i.d. and follows a uniform distribution, $\mathcal{U}(0, 1)$. This assumption allows us to apply uniform scalar quantization individually to each element, which can be written as
\begin{equation}
    \hat{{z}}_i = \frac{\lfloor(2^Q-1) {z}_i + \frac{1}{2}\rfloor}{2^Q-1},
\end{equation}
where $Q$ represents the number of bits per element, and $\lfloor x + \frac{1}{2}\rfloor$ is the rounding operation. Notably, this rounding operation is not differentiable, which complicates the application of gradient descent during training phases.

To overcome this, we employ the straight through estimator (STE) \cite{deepjsccq}, which allows the gradients of $\hat{\bm{z}}$ to pass through to ${\bm{z}}$, facilitating the update of neural network parameters within the latent encoding function $f_s(\cdot)$:
\begin{equation}
    \nabla_{{\bm{z}}}\mathcal{L} \leftarrow \nabla_{\hat{\bm{z}}}\mathcal{L},
    \label{eq:eq_ste}
\end{equation}
where $\mathcal{L}$ denotes the loss function. As detailed in \eqref{eq:evm_fd}, our focus lies on optimizing the signal reconstruction specifically in the occupied subcarriers. Therefore, the loss function is defined as
\begin{align}
    \mathcal{L}_{uq} = \frac{\|\bm{x}_f - \bm{\hat{x}}_f\|_2^2}{\|\bm{x}_f\|^2_2}.
    \label{eq:uq_loss}
\end{align}

The total number of bits required by the uniform quantization scheme can be written as
\begin{equation}
    L_{\bm{b}} = VQN^\prime + Q_s \left\lceil \frac{N^\prime}{N_s} \right\rceil.
\end{equation}
However, a notable limitation of the uniform quantization scheme is the assumption that each element of $\bm{z}$ adheres to an i.i.d. uniform distribution. This assumption can lead to sub-optimal quantization performance, particularly when the actual distribution of the latent vector $\bm{z}$ diverges from this assumed uniformity. The implications of this divergence will be further explored and substantiated by the simulation results presented in Section \ref{sec:experiment}.

\subsubsection{Vector Quantization}
The strong i.i.d. uniform assumption can be relaxed by vector quantization: we partition $\bm{z}$ into blocks and treat each block as an independent unit that follows a categorical distribution, rather than assuming each element within the blocks is i.i.d. Specifically, we define $\bm{z} = [\bm{z}_1, \ldots, \bm{z}_{N_b}]$,\footnote{With a slight abuse of notation, here we denote by $\bm{z}_i$ the individual blocks after partition. This usage differs from \eqref{eq:eq_latent_z}, wherein $\bm{z}_i$ specifies the output of the GRU corresponding to the $i$-th time step.} where each $\bm{z}_i \in \mathbb{R}^{b}$ and $bN_b = VN^\prime$. Each block $\bm{z}_i$ is independently quantized using a learned codebook, a process that mirrors the training approach of the Vector Quantization Variational Autoencoder (VQVAE) \cite{vqvae}. It is crucial to note that our goal diverges from that of the VQVAE. While the VQVAE primarily aims to generate new data samples, our objective is to develop a discrete latent representation that ensures high-quality reconstruction performance.
Given a bit budget of $Q$ bits per element, this necessitates a codebook $\bm{E} \in \mathbb{R}^{2^{bQ} \times b}$, and we detail the training process in the following.

Rather than restricting the elements of $\bm{z}$ to the interval $[0, 1]$ as required by the uniform quantization method, we enhance vector quantization by selecting the nearest vector from codebook $\bm{E}$ for each block $\bm{z}_i$. The quantization process is given by
\begin{align}
    \hat{\bm{z}}_{i} &= \bm{E}_j, \notag \\
    s.t.~j &= \argmin_{k \in [2^{bQ}]} \|\bm{z}_i - \bm{E}_k\|^2_2.
\end{align}
where $\hat{\bm{z}}_i$ is the quantized output of the $i$-th block.

Similar to the uniform quantization, vector quantization does not inherently support gradient-based optimization due to its non-differentiable nature. To overcome this, we pass the gradient from the quantized latent $\hat{\bm{z}}$ back to the original latent $\bm{z}$. Optimizing the vector quantization scheme involves tuning both the non-linear transformation functions and the codebook, ensuring effective reconstruction performance. The loss function, integrating vector quantization, is structured as
\begin{align}
    \mathcal{L}_{vq} = \frac{\|\bm{x}_f - \bm{\hat{x}}_f\|_2^2}{\|\bm{x}_f\|^2_2}  + \|\text{sg}[\bm{z}] - \bm{\hat{z}}\|^2_2 + \beta \|\bm{z} - \text{sg}[\bm{\hat{z}}]\|^2_2,
    \label{eq:vqvae_loss}
\end{align}
where $\text{sg}[\cdot]$ denotes the stop gradient operator. The first term of the loss function optimizes the non-linear transformation functions $f_s(\cdot)$ and $g_d(\cdot)$. The second term is designed to learn a good codebook by gradually fitting it to the encoder output, $\bm{z}$. Since the second term aims to optimize the codebook rather than the latent, we apply the stop gradient operator, $\text{sg}[\cdot]$ to $\bm{z}$. 
The final term is known as the commitment loss, which forces the latent to converge to the learned codebook, $\bm{E}$. In this paper, we set $\beta = 1$ as the reconstruction performance is not sensitive with respect to different $\beta$ values.

This vector quantization strategy reduces the limitations of assuming i.i.d. uniform distribution in latent quantization by allowing for a more nuanced grouping and coding of data blocks. This method can be viewed as an enhancement of traditional vector quantization approaches like those discussed in \cite{vq_cpri}, with the critical difference being that our codebook is derived through learning rather than predefined algorithms such as Lloyd's algorithm. However, it is essential to acknowledge that our assumption about the i.i.d. nature of blocks could still deviate from reality. To address potential discrepancies and further refine our model, we will explore advanced compression frameworks in Section \ref{sec:neural_compress}.

\begin{figure}[!t]
\centering
\includegraphics[width=0.6\columnwidth]{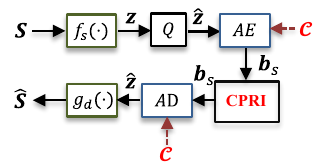}
\caption{The pipeline of the proposed neural compression model. The codebook, $\mathcal{C}$, is constructed using the probability prediction net illustrated in \eqref{eq:codebook}.}
\label{fig:fig_neural_compress}
\end{figure}

\section{Neural CPRI Compression with Probability Prediction} \label{sec:neural_compress}
In the latent quantization approach outlined in \eqref{eq:latent_z} and \eqref{eq:eq_latent_z}, we generate $N^\prime$ vectors $\{\bm{z}_j\}_{j\in [ N^\prime]}, \bm{z}_j \in \mathbb{R}^V$, through a non-linear transformation function $f_s(\cdot)$. These vectors can be assembled into a matrix $\bm{Z}\in \mathbb{R}^{V\times N^\prime}$:
\begin{equation}
    \bm{Z} = [\bm{z}_1, \ldots, \bm{z}_{N^\prime}].
    \label{eq:eq_latent_Z}
\end{equation}
From an information-theoretical perspective, if the distribution of $\bm{Z}$ is known, it is possible to compress the latent vectors losslessly to a rate close to the entropy of $\bm{Z}$, denoted by $H(\bm{Z})$, using lossless source coding algorithms. However, in practical scenarios where the modulation order, channel responses, and noise levels are unknown, the precise distribution of both $\bm{s}$ and $\bm{Z}$ remains elusive. To address this, we propose using a neural network, which we refer to as the probability prediction net, parameterized by $\bm{\phi}$, to model the distribution of $\bm{Z}$. This network aims to approximate the fully factorized probability of the latent variables, thus facilitating a more streamlined and efficient compression process.

\subsection{Pipeline}\label{sec:pipeline}
We first overview the pipeline of the proposed neural compression approach with probability prediction introduced in the image compression literature \cite{balle2016end, balle2018variational}. 
Initially, each element of the matrix $\bm{Z}$ is rounded to the nearest integer, resulting in $\hat{\bm{Z}} = \lfloor \bm{Z}+\frac{1}{2}\rfloor$.
It is important to note that any random variable $Z_{i,j}$ within the range $(\hat{{z}}_{i,j} - \frac{1}{2}, \hat{{z}}_{i,j} + \frac{1}{2}]$ will be rounded to the same value $\hat{{z}}_{i,j}$.
Consequently, the probability of the quantized random variable $\hat{Z}_{i,j}$ can be written as
\begin{equation}
    \text{Pr}(\hat{Z}_{i,j} = \hat{{z}}_{i,j}) = P_{\phi_i}\left(\hat{{z}}_{i,j} + \frac{1}{2}\right) - P_{\phi_i}\left(\hat{{z}}_{i,j} - \frac{1}{2}\right),
    \label{eq:predicted_prob}
\end{equation}
where $P_{\phi_i}(\cdot), i\in [V]$ denotes the probability mass function (PMF) for the $i$-th channel (i.e., the $i$-th row of the latent $\bm{Z}$), and is parameterized by ${\phi}_i$. 
The overall learnable parameters $\bm{\phi}$ are a collection of the parameters of all $V$ channels, i.e., $\bm{\phi} \triangleq \{{\phi}_i\}_{i \in [V]}$.
The training methodologies for these parameters are elaborated upon later in Section \ref{sec:neural_compress_train}.

Subsequently, an arithmetic encoder processes $\hat{\bm{Z}}$ into a bit sequence $\bm{b}_s$ following the probability model described in \eqref{eq:predicted_prob}, which is then transmitted over the CPRI link. The length of $\bm{b}_s$ can be closely approximate the predicted entropy of $\hat{\bm{Z}}$ to within one bit, as given by
\begin{align}
    L_{\bm{b}_s} <  - \sum_{i,j} \log_2 \left[P_{\phi_i}\Big(\hat{{z}}_{i,j} + \frac{1}{2}\Big) - P_{\phi_i}\Big(\hat{{z}}_{i,j} - \frac{1}{2}\Big)\right] + 1.
    \label{eq:bit_cost}
\end{align}
Upon receiving the bit sequence, the arithmetic decoder is applied to reconstruct the latent matrix, $\hat{\bm{Z}}$, which is then input into the latent decoding function to produce the final reconstructed signal $\hat{\bm{s}}$. 

During the deployment phase, it is more practical to use a lookup table instead of dynamically generating probabilities with the probability prediction network for each quantized value $\hat{{z}}_{i,j}$. This approach leverages the fact that the range of possible values for $\hat{{z}}_{i,j}$ can be precomputed and stored efficiently. Formally speaking, for the $i$-th channel, denote the support set of $\hat{\bm{z}_i} = \{\hat{{z}}_{i,j}\}_{j\in [N^\prime]}$ as $[c_{i}^l, c_{i}^h] \cap \mathbb{Z}$, then the $j$-th entry of the lookup table corresponding to the $i$-th channel, denoted by $\mathcal{C}[i,j]$, is written as
\begin{align}
    \mathcal{C}[i,j] = P_{\phi_i}\Big(j + c_{i}^l + \frac{1}{2}\Big) - P_{\phi_i}\Big(j + c_{i}^l - \frac{1}{2}\Big),
    \label{eq:codebook}
\end{align}
satisfying $\sum_{j = 0}^{c_{i}^h - c_{i}^l} \mathcal{C}[i,j] = 1, \forall i$. 
We refer the readers to \cite{balle2018variational} for more details concerning the  probability prediction network which is omitted here due to the page limit.

\subsection{Training the Compression Module}\label{sec:neural_compress_train}
The rounding operation, which generates $\hat{\bm{Z}}$ from ${\bm{Z}}$, is not directly applicable in training due to its non-differentiable nature, as the gradient is essentially zero almost everywhere. To address this challenge during training, we follow \cite{balle2016end} and introduce a noise component sampled from a uniform distribution, $\mathcal{U}\left(-\frac{1}{2}, \frac{1}{2}\right)$, to approximate the rounding error, defining $\tilde{\bm{Z}} = {\bm{Z}} + \mathcal{U}(-\frac{1}{2}, \frac{1}{2})$. The probability density function (pdf) of the resulting distribution is given by
\begin{align}
    p(\tilde{\bm{Z}}|\bm{\phi}) = \prod_{i,j} \left(p_{\tilde{{Z}}_{i,j}|\bm{\phi}_i}(\bm{\phi}_i)*\mathcal{U}(-\frac{1}{2}, \frac{1}{2}) \right)(\tilde{{Z}}_{i,j}).
\label{equ:probv}
\end{align}

To optimize the neural compression framework, including the non-linear transformation functions $f_s(\cdot), g_d(\cdot)$ and the probability prediction net $P_{\bm{\phi}}(\cdot)$, we employ a loss function that targets both the quality of signal reconstruction and the efficiency of compression, reflecting a balanced R-D trade-off. The loss function is defined as
\begin{equation}
    \mathcal{L} = \lambda \frac{\|\bm{x}_f - \bm{\hat{x}}_f\|_2^2}{\|\bm{x}_f\|^2_2} + I,
    \label{eq:compress_loss}
\end{equation}
where $\lambda$ is a tuning parameter that adjusts the emphasis between reconstruction fidelity and compression rate, and $I$ represents the average bit cost per element, calculated as
\begin{align}
    I = \frac{1}{VN^\prime} \mathbb{E}_{{\bm{S}} \sim p(\bm{S})}\mathbb{E}_{\tilde{\bm{Z}} \sim q(\tilde{\bm{Z}}|\bm{S})} \left(- \log_2(p(\tilde{\bm{Z}}|\bm{\phi}))\right),
\label{equ:Iz}
\end{align}
where the posterior probability $q(\tilde{\bm{Z}}|\bm{S})$ is given by
\begin{align}
    q(\tilde{\bm{Z}}|\bm{{S}}) &=  \prod_i\mathcal{U}\Big(\tilde{{z}}_{i,j}|{z}_{i,j}-\frac{1}{2}, {z}_{i,j}+\frac{1}{2}\Big), \notag \\
    \bm{Z} &= f_s(\bm{S}).
\label{equ:post}
\end{align}
The posterior probability follows because $\tilde{\bm{Z}}$ is constructed by adding uniform noise to ${\bm{Z}}$. 

We point out that \eqref{equ:Iz} is essentially the cross entropy between the actual distribution of $\tilde{\bm{Z}}$, i.e., {$\int q(\tilde{\bm{Z}}|\bm{{S}}) p(\bm{S}) dS$} and the predicted distribution $p(\tilde{\bm{Z}}|\bm{\phi})$. Our goal is to minimize it by learning a $p(\tilde{\bm{Z}}|\bm{\phi})$ that closely matches the true distribution.
Various models, each characterized by different $\lambda$ settings, are trained to explore distinct points on the R-D curve, ensuring that the neural compression algorithm optimally balances the average compression rate and reconstruction quality in practical applications, as demonstrated in the simulations.

\begin{figure*}[t]
\centering
\includegraphics[width=0.8\linewidth]{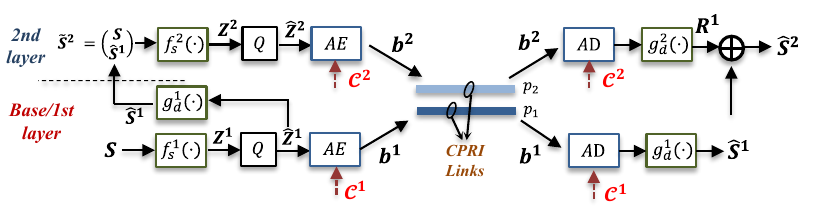}\\
\caption{An illustration of the successive refinement framework with $L = 2$. The encoder of the refinement (2nd) layer aims to produce a bit sequence, $\bm{b}_1$ to generate the residual, $\bm{R}^1$. The final reconstruction of the $L=2$ case is simply a summation of the coarse reconstruction $\hat{\bm{S}}^1$ and the residual $\bm{R}^1$.}
\label{fig:successive_refine}
\end{figure*}

\subsection{Successive Refinement Transmission}\label{sec:successive_refine}
In practical network architectures, there can be multiple (virtual) links connecting the RRH to BBU. Each of these links can exhibit distinct Qualities of Service (QoS), typically differentiated by their error probabilities. In this context, we consider the successive refinement transmission strategy \cite{scalar_cpri}, which leverages this diversity in link quality to enhance data transmission robustness and efficiency.

In successive refinement transmission, the links are organized in order of increasing error probability, with the most reliable link designated as the base/first layer. This base layer is responsible for transmitting a coarse version of the original signal, $\bm{s}$, providing a foundational reconstruction. Subsequent layers, associated with links of lower reliability, are tasked with transmitting refinement packets that incrementally enhance the signal based on the previously reconstructed output.
The key merit of the successive refinement model is that, even if an error occurs in the $(\ell+1)$-th packet, we are still able to obtain a usable version of the signal using the information from the first $\ell$ packets. 

We consider $L$ CPRI links, designating one as the base layer and the remaining $L-1$ as refinement layers. 
Both decimation and block scaling modules are performed only once, and the input to the successive refinement encoder, $\bm{S}$, remains consistent as defined in \eqref{eq:big_S}. The operation of the proposed successive refinement model is depicted in Fig. \ref{fig:successive_refine} and is described as follows.
\subsubsection{Base layer}
The base layer encoder, denoted as $f_s^{1}(\cdot)$, processes this input to generate $\bm{Z}^1$, with each element then quantized to integer values to form $\hat{\bm{Z}}^1$. This quantized latent representation is subsequently encoded into a bit sequence, $\bm{b}^1$, which is transmitted over the CPRI link.
Upon receipt, the base layer decoder, $g_d^1(\cdot)$, decodes $\bm{b}^1$ to produce a coarse reconstruction of the original signal, $\bm{\hat{S}}^1 = g_d^1(\hat{\bm{Z}}^1)$.
The process for generating refinement information begins with $\ell = 2$ and can be extended to scenarios where $\ell > 2$.

\subsubsection{Refinement layers}
As shown in Fig. \ref{fig:successive_refine}, the successive refinement encoder also has access to the the base layer decoder, $g_d^1(\cdot)$, thus, it is able to generate $\bm{\hat{S}}^1$, which is identical to the base layer reconstruction at the decoder side. 
The second-layer encoder concatenates the coarse reconstruction with the original signal, $\bm{{S}}$, to form a larger matrix $\bm{\widetilde{S}}^2 \in \mathbb{R}^{4 \times N^\prime}$, which can be written as $\bm{\widetilde{S}}^2 = \begin{bmatrix} \bm{{S}} \\ \bm{\hat{S}}^1 \end{bmatrix}$.
Then, the second-layer encoder, $f_s^2(\cdot)$, is employed: $\bm{Z}^2 = f_s^2(\bm{\widetilde{S}}^2)$. Note that $f_s^2(\cdot)$ is identical to $f_s^1(\cdot)$ except for the input dimension. The output $\bm{Z}^2$ will be quantized and arithmetically encoded to generate the bit sequence, $\bm{b}^2$. At the decoder side, the bit sequence is arithmetically decoded and we use a second-layer decoder, $g_d^2(\cdot)$, to estimate the residual information, $\bm{R}^1$. The reconstruction at the second layer can thus be written as $\hat{\bm{S}}^2 = \hat{\bm{S}}^1 + \bm{R}^1$. 

For $\ell > 2$, we have
\begin{align}
    \bm{\widetilde{S}}^{\ell} &= \begin{bmatrix} \bm{{S}} \\ \bm{\hat{S}}^{\ell-1} \end{bmatrix}; \notag\\ 
    \bm{Z}^{\ell} &= f_s^{\ell}(\bm{\widetilde{S}}^{\ell});\notag \\ 
    \bm{R}^{\ell-1} &= g_d^{\ell}(\hat{\bm{Z}}^{\ell});\notag\\
    \bm{\hat{S}}^{\ell} &= \bm{\hat{S}}^{\ell-1} +  \bm{R}^{\ell-1}.
    \label{eqn:sr_ell}
\end{align}
After obtaining $\bm{\hat{S}}^{L}$, $\hat{\bm{x}}_f$ is generated following the same procedure illustrated in Section \ref{sec:system_model} and the reconstruction performance is evaluated according to \eqref{eq:evm_fd} and \eqref{eq:evm_fd_dB}.

\subsubsection{Training methods}
The training of the successive refinement layers follows a structured approach. Initially, the neural networks for the base layer are trained via the standard training approach illustrated in Section \ref{sec:neural_compress_train}, with hyper parameters $\lambda_1$ to balance the rate and the distortion terms. 
Then, the neural network parameters of the $\ell$-th ($\ell>1$) layer are successively trained by fixing the parameters of all previous layers $\ell^\prime<\ell$, and  focusing on optimizing the current layer with the loss function
\begin{equation}
    \mathcal{L}_\ell = \lambda_\ell \frac{\|\bm{x}_f - \bm{\hat{x}}_f^\ell\|_2^2}{\|\bm{x}_f\|_2^2} + I_\ell,
    \label{eq:ell_loss}
\end{equation}
where $I_\ell$ denotes the estimated bit cost for compressing the latent $\bm{Z}^{\ell}$ and $\bm{\hat{x}}_f^\ell$ is obtained from $\bm{\hat{S}}^{\ell}$ in \eqref{eqn:sr_ell}. 

\begin{rem}
The choice of $\lambda_\ell$ is crucial; as the reconstruction accuracy improves with each layer, $\lambda_\ell$ needs to be adjusted to maintain an effective balance in the diminishing $\frac{\|\bm{x}_f - \bm{\hat{x}}_f^\ell\|_2^2}{\|\bm{x}_f\|_2^2}$ term, ensuring optimal performance across different layers.
\end{rem}

\begin{figure}[!t]
\centering
\includegraphics[width=0.7\columnwidth]{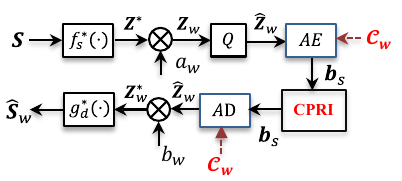}
\caption{Pipeline of the variable-rate compression framework, where we assume the model corresponding to the $w$-th compression level is applied.}
\label{fig:fig_var_rate}
\end{figure}

\subsection{Variable-Rate Compression}\label{sec:variable_rate}
Another critical concern of CPRI compression in practical deployments is managing the diverse reconstruction requirements across different users. This diversity necessitates maintaining multiple neural network configurations at both the RRHs and BBUs, which can lead to significant storage challenges, especially at the RRHs. To effectively address this issue, we introduce a variable-rate compression scheme that not only accommodates varied user demands but also significantly reduces the storage footprint. 
The methodology is applicable to both uplink and downlink scenarios.

\subsubsection{Overview}
For each model with a specific $\lambda$ value, RRH has to store the corresponding weights for the bi-GRU and FC modules as well as the lookup table introduced in \eqref{eq:codebook} for entropy coding. This causes severe storage issue when the number of different $\lambda$ values is large.
Note that in our case, the lookup table $\mathcal{C}$ is relatively small. 
Thus, we seek to save the storage by introducing a shared non-linear transformation module, $f_s(\cdot)$ for all the models targeting different R-D performances.

To achieve this, we start from the model with the best reconstruction performance (trained under the largest $\lambda$ value), whose latent encoding and decoding functions are denoted as $f_s^*(\cdot)$ and $g_d^*(\cdot)$, respectively. For an input signal $\bm{S}$, we first obtain the latent vector, $\bm{Z}^* = f_s^*(\bm{S})$. Then, for the $w$-th ($w \in [W]$) reconstruction, we scale the latent by the scaling factor, $a_w \in (0, 1]$:
\begin{align}
    \bm{Z}_w &= a_w \bm{Z}^*, \\
    0 < a_1 < &a_2 < \cdots < a_W = 1. \notag
\end{align}
Note that for the special case when $w = W$, we have $a_W = 1, \bm{Z}_W = \bm{Z}^*$ which corresponds to the model with the best reconstruction performance. 
The $w$-th scaled latent is then quantized as:
\begin{align}
    \hat{\bm{Z}}_w &=  \left\lfloor  \bm{Z}_w + \frac{1}{2} \right\rfloor. \notag
\end{align}
Note that the scaling operation will change the underlying distribution of the quantized latent variable, thus, the probability prediction net, $P^*_{\bm{\phi}}(\cdot)$ optimized for the latent $\hat{\bm{Z}}^*$, is no longer feasible for the scaled one. To resolve this, for each $a_w, w \in [W-1]$, we train a new probability prediction net, denoted by $P^w_{\bm{\phi}_w}(\cdot)$, which will be detailed later. The decoding process also needs to be changed with respect to the scaling operation. To be precise, we introduce a set of re-scaling factors:
\begin{align}
    b_w = 1/a_w,  \quad w \in [W]
    \label{eq:bk}
\end{align}
for different models, and they will be directly multiplied with the quantized latents as:
\begin{align}
    \hat{\bm{Z}}^*_w &=  b_w \lfloor  a_w \bm{Z}^* + \frac{1}{2} \rfloor \approx \hat{\bm{Z}}^*. 
    \label{eq:quant_vr}
\end{align}
The approximation holds due to the fact that we choose $b_w = 1/a_w$. This is to ensure the input to the latent decoding function, $g_d^*(\cdot)$ has nearly identical distribution with its original input, $\hat{\bm{Z}}^*$, such that  $g_d^*(\cdot)$ can produce satisfactory reconstruction without modifying its parameters.

\begin{figure*}
     \centering
     \begin{subfigure}{0.66\columnwidth}
         \centering
         \includegraphics[width=\columnwidth]{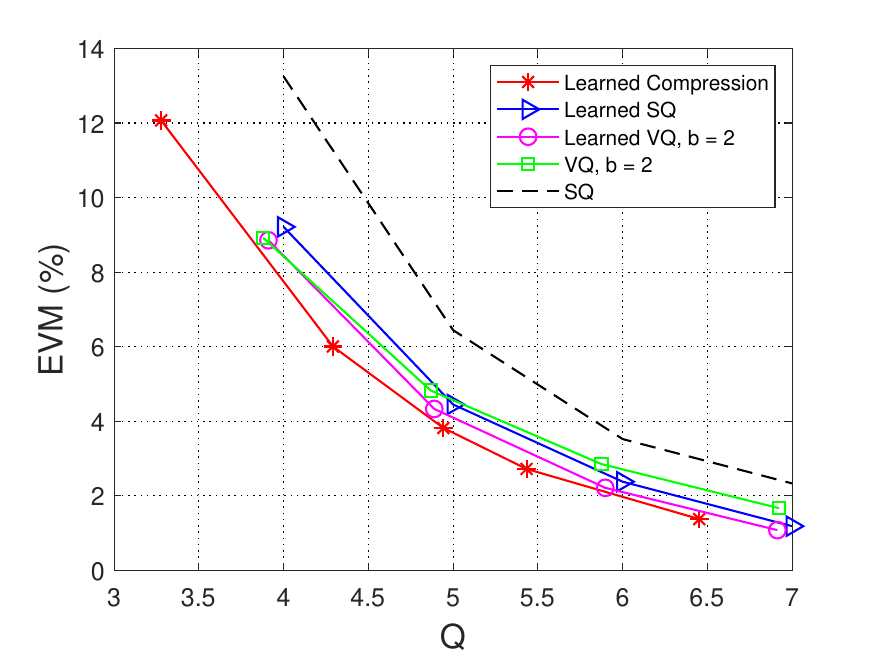}
         \caption{}
     \end{subfigure}
     \begin{subfigure}{0.66\columnwidth}
         \centering
         \includegraphics[width=\columnwidth]{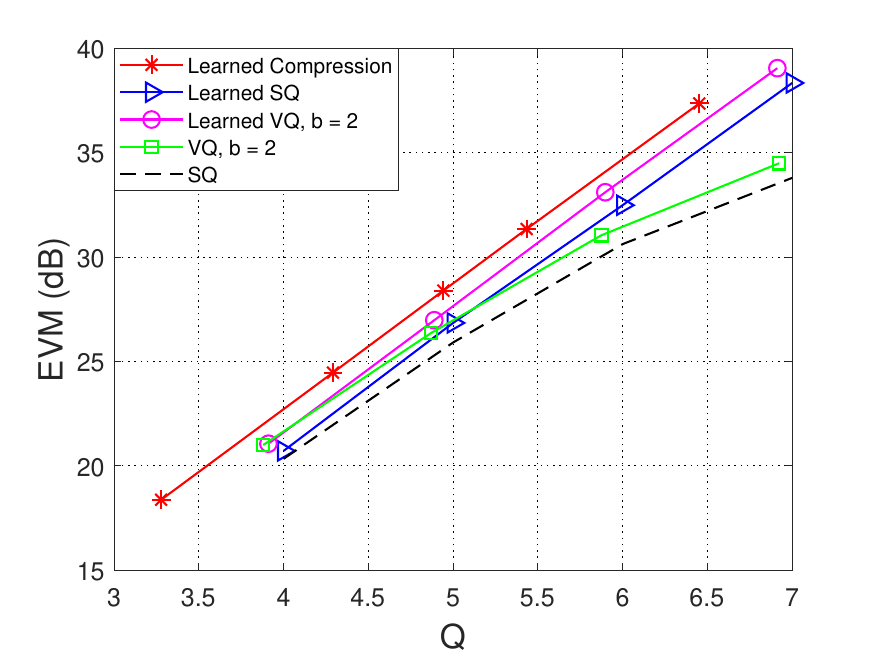}
         \caption{}
     \end{subfigure}
     \begin{subfigure}{0.66\columnwidth}
         \centering
         \includegraphics[width=\columnwidth]{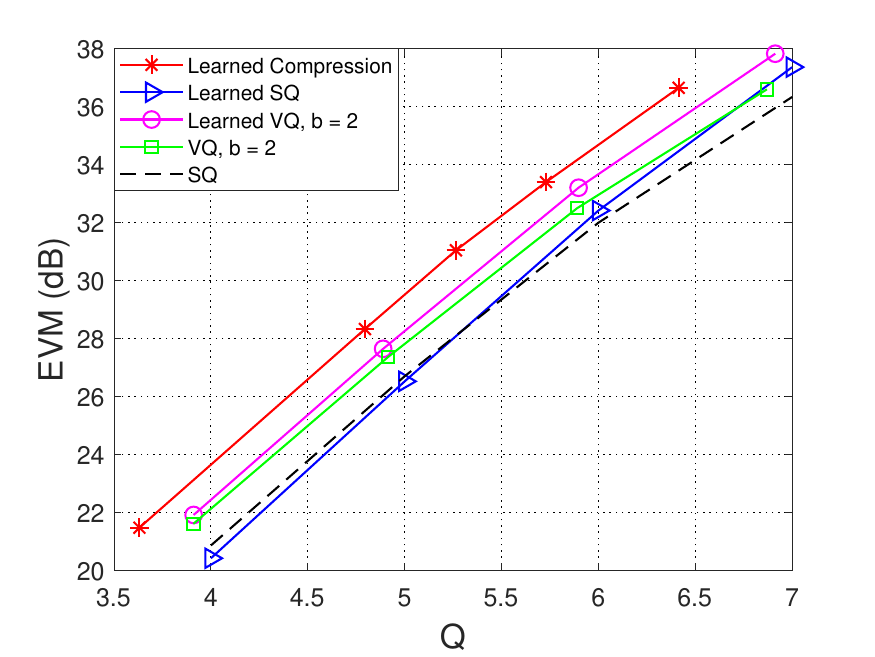}
         \caption{}
     \end{subfigure}
  \caption{Reconstruction performances of the proposed schemes for CPRI compression: (a) \& (b) the EVM performance (in percentage and dB) of the reconstructed signal in the downlink scenario; (c) EVM in dB for the uplink scenario where the cyclic prefix as well as the received signal are jointly compressed.}
\label{fig:final_simu}
\end{figure*}

\subsubsection{Obtaining $P^w_{\bm{\phi}_w}(\cdot)$ and $\mathcal{C}_w$}
We then illustrate the acquisition of new probability prediction net $P^w_{\bm{\phi}_w}(\cdot)$ and the codebook $\mathcal{C}_w$ corresponding to the $w$-th reconstruction. 
In particular, we keep the neural network parameters of the non-linear transformation functions, $f_s^*(\cdot)$ and $g_d^*(\cdot)$ to be fixed during training and randomly initialize the parameters of $P^w_{\bm{\phi}_w}(\cdot)$. The training objective can be expressed as:
\begin{equation}
    \min_{\bm{\phi}_w}  I_w(\bm{\phi}_w),
    \label{eq:min_phi}
\end{equation}
where $I_w$ follows the same definition in \eqref{equ:Iz} and is a function of the parameters, $\bm{\phi}_w$.
Note that we omit the reconstruction term in \eqref{eq:compress_loss} as it is independent with respect to $\bm{\phi}_w$ and the loss function $\mathcal{L}_w$ is simply $I_w$. In other words, we only need to train a probability prediction net which predicts the distribution of the scaled latent, $\hat{\bm{Z}}^*_w$, to produce an accurate estimation of $I_w$. Since the parameters of the latent encoding and decoding functions are fixed, the training process can be made efficient. After obtaining $P^w_{\bm{\phi}_w}(\cdot)$, we can construct the lookup table, $\mathcal{C}_w$ for each of the models with $w \in [W-1]$ following \eqref{eq:codebook}.

\section{Numerical Experiments}

\label{sec:experiment}
\subsection{Parameter Settings and Training Details}
To verify the efficacy of the proposed schemes, we consider a 5G scenario \cite{5G_setting}, where the channel bandwidth, subcarrier spacing, FFT size, and the sampling frequency are set to $20$ MHz, $60$ kHz, $512$ and $30.72$ MHz, respectively. The decimation module for both uplink and downlink scenarios first upsamples the signal by $5$ times followed by a low-pass filter, and finally downsamples the filtered signal by $8$ times. 

Next, we introduce the DNN parameters for the proposed latent quantization and neural compression schemes. The bi-GRU module of the non-linear transformation function of both schemes has 2 hidden layers with the hidden size equals to $N_h = 32$. We set $V = 2$ for the latent vectors $\{\bm{z}_j\}_{j\in [N^\prime]}; \bm{z}_j \in \mathbb{R}^{V}$. The batch size for both training and evaluation phases is set to $32$. Adam optimizer is adopted with a varying learning rate, initialized to $10^{-4}$ and dropped by a factor of $0.8$ if the validation loss does not improve in $20$ consecutive epochs.

\subsection{General Performance}

\subsubsection{Downlink case}
We first evaluate the R-D performance of the downlink signal. In this case, CP is removed and the input to the non-linear transformation function is of $640$-dimension. The modulation order is assumed to be $64$.  For both scalar and vector latent quantization approaches, we consider bits per element, $Q \in \{4, 5, 6, 7\}$, while the block size $b$ is set to $2$ for the vector quantization scheme. For the neural compression approach, different models are trained corresponding to $\lambda = \{10^2, 5\times 10^2, 10^3, 5\times 10^3\}$. The block scaling parameter, $N_s$ is set to $320$ for all these schemes. In other word, only one scaling factor, $t$, is used for the entire input sequence.  
Two conventional schemes are considered as benchmarks. To be precise, we implement the scalar quantization (SQ) proposed in \cite{scalar_cpri} and the vector quantization (VQ) scheme from \cite{vq_cpri}. For a fair comparison, we assume they adopt the same $N_s$ as in the proposed schemes.  Their relative R-D performance in terms of EVM defined in \eqref{eq:evm_fd} and \eqref{eq:evm_fd_dB} are shown in Fig. \ref{fig:final_simu} (a) and (b), respectively.

As can be seen, the scalar quantization scheme in \cite{scalar_cpri} yields the worst performance as the elements of the input sequence are non-i.i.d.. Both the vector quantization schemes, i.e., the learned scheme as well as that in \cite{vq_cpri} adopt entropy coding illustrated in \eqref{eq:eq_cr} to further enhance the compression efficiency bringing an approximately $0.1$ bit per element reduction. Both the learned scalar and vector latent quantization approaches achieve better performance compared with the conventional scalar and vector quantization schemes, respectively. The learned latent vector quantization outperforms its uniform quantization counterpart as it relaxes the assumption that each element follows an i.i.d. uniform distribution (while the conventional one adopts non-uniform quantization).
Finally, we find the neural compression algorithm outperforms all the other schemes as expected, as it dedicatedly models the latent distribution and arithmetically encodes the quantized latent using a variable-rate codebook based on its distribution.

\subsubsection{Uplink case}
For the uplink scenario, we assume that a single user\footnote{It can be easily extended to the multi-user scenario, yet we will not show the corresponding simulations due to the page limit.} is communicating with the base station using the SC-FDMA modulation. A $7$-tap multi-path fading channel is considered where the coefficient of each tap is modeled as a complex Gaussian random variable drawn from $\mathcal{CN}(0, 1)$. The SNR is set to $5$ dB and the modulation order is $64$.

Different from the downlink case, after receiving the uplink signal, the RRH compresses the \textbf{entire} signal including CP as it is used to perform synchronization at the BBU. We assume the CP length equals to $64$ thus the input $\bm{s}$ to the latent encoding function is of $720$ dimensions resulting in a block scaling parameter $N_s = 360$, i.e., a single scaling factor $t$ for the whole input. 
The relative performance of the proposed learning-based schemes as well as the baselines is shown in Fig. \ref{fig:final_simu} (c). The latent vector quantization scheme achieves superior EVM performance over the conventional VQ scheme. However, the learned latent uniform quantization scheme is outperformed by the conventional SQ scheme when $Q < 6$, which is because the learned scheme assumes uniform quantization whereas the conventional SQ adopts non-uniform quantization.   
These observations are slightly different from the downlink case which is due to the fact that the pdf of the input signal $\bm{s}$ changes after passing through the multi-path fading channel. The proposed neural compression scheme, on the other hand,  shows its effectiveness which outperforms the conventional VQ by $Q \approx 0.3$ when EVM equals to $32$ dB.

\subsection{Performance with mismatched scenarios}
We then provide simulation results for the mismatched scenarios.  In particular, mismatched SNR values, constellation orders and number of channel taps are considered for the uplink scenario. Note that, since the neural compression model outperforms its latent quantization counterpart, we will focus on the neural compression scheme in the following simulations.

\subsubsection{Mismatched SNR values} In this simulation, we adopt the previous setting where 64QAM symbols are transmitted over a noisy 7-tap multi-path fading channel. Different models are trained with three different SNR values, namely, $\{-5, 5, 15\}$ dB. We evaluate the model trained at $5$ dB under SNR values of $\{-5, 15\}$ dB and the R-D curves are shown in Fig. \ref{fig:fig_mismatch} (a).

As can be seen, the proposed neural compression module shows its robustness against the SNR variations as the R-D curves evaluated at $\mathrm{SNR} = \{-5, 15\}$ dB are similar or slightly worse compared to the matched case, where the model is trained and tested under the same SNR value. 
It is also observed that for the matched cases, the R-D performance under a large SNR value is better. This can be understood by considering the extreme cases: when $\mathrm{SNR} = \infty$ where no noise is added, the signal exhibits certain patterns and the neural compression module may utilize such patterns for better compression. On the other hand, when $\mathrm{SNR}$ is sufficiently low, the signal follows a Gaussian distribution, since Gaussian random variable has the largest entropy with a fixed variance, the R-D curve would be less satisfactory. 

\subsubsection{Mismatched modulation orders} The previous simulations assume $64$QAM symbols. We consider different constellation orders such as 4QAM and 16QAM and evaluate the model trained at $64$QAM on the mismatched constellations. The simulation assumes a 7-tap multi-path fading channel with $\mathrm{SNR} = 5$ dB. We found by experiments that the proposed scheme can achieve almost identical R-D performance regardless of the modulation order. Furthermore, the model trained at a specific constellation can achieve nearly the same performance when the modulation order changes. The simulation result is omitted due to the page limit.

\subsubsection{Mismatched number of taps}
In this part, we demonstrate the robustness of the proposed scheme by changing the number of channel taps, denoted as $N_{tap}$, of the noisy multi-path fading channel. In particular, the SNR of the channel is fixed at $5$ dB and we consider $N_{tap} = \{1, 3, 7\}$. Then, the R-D performance with respect to different $N_{tap}$ is shown in Fig. \ref{fig:fig_mismatch} (b). 
It can be seen that a better R-D curve is obtained when $N_{tap}$ is larger.
Moreover, the proposed scheme can achieve similar or slightly worse performance when there is a mismatch in the number of hops.

\begin{figure}
    \centering
\begin{subfigure}{\columnwidth}
    \centering
    \includegraphics[width=0.8\columnwidth]{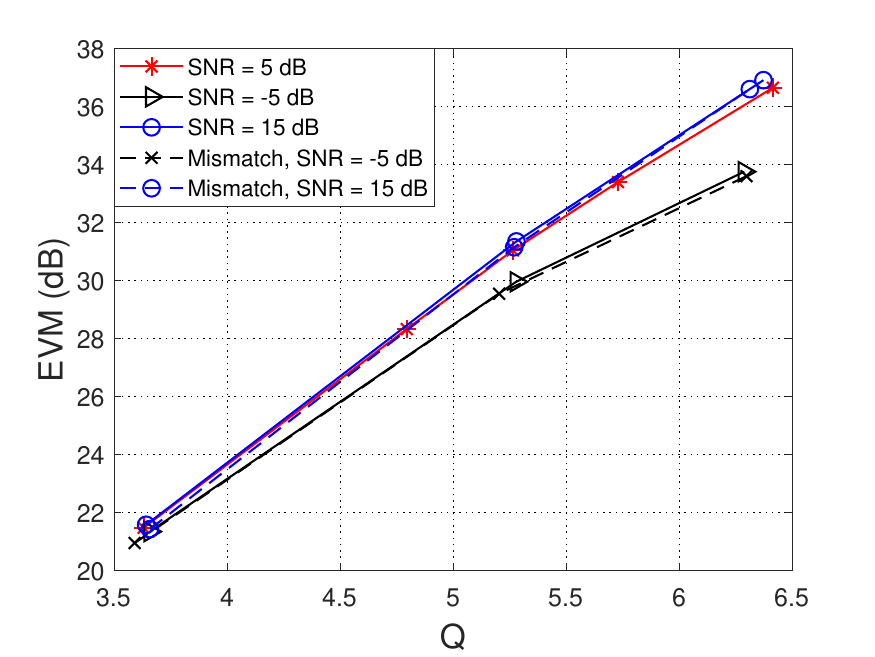}
    \caption{Mismatched SNRs}
\end{subfigure}

\begin{subfigure}{\columnwidth}
    \centering
    \includegraphics[width=0.8\columnwidth]{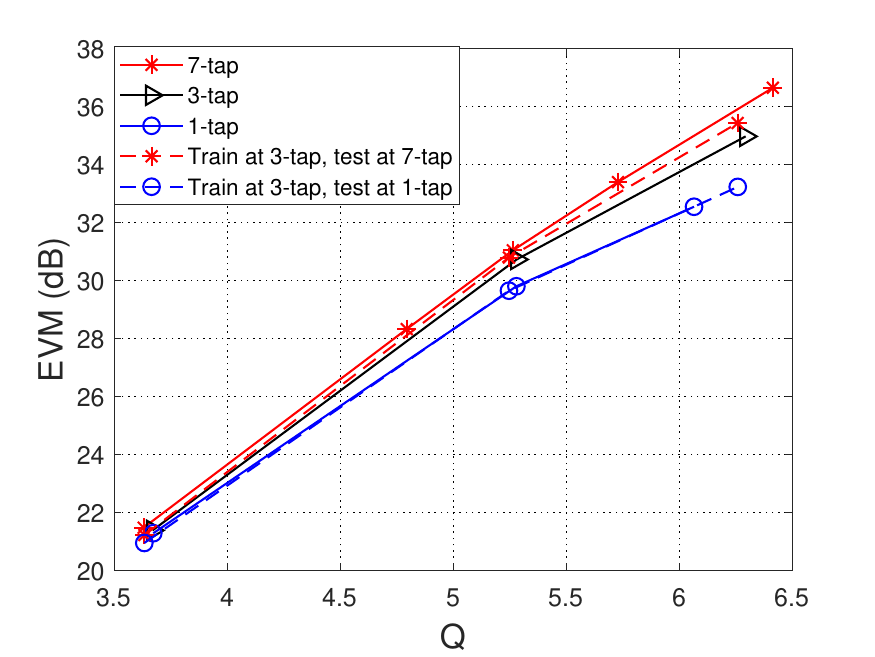}
    \caption{Mismatched $N_{tap}$}
\end{subfigure}

\caption{{The R-D curves for the mismatched scenarios, namely, the mismatched SNR values and the number of channel taps. The dashed lines denote the R-D curve when a model trained with a specific SNR/number of hops is evaluated under a different setting.}}
\label{fig:fig_mismatch}
\end{figure}

\subsection{Improved Neural Compression Solutions}
\subsubsection{Successive refinement model}
We then show the performance of the successive refinement model illustrated in Section \ref{sec:successive_refine}. In this simulation, the same setup in Fig. \ref{fig:final_simu} (c) is considered and the maximum number of layers is set to 3, i.e., $L = 3$. The base layer is trained with $\lambda_1 = 10^2$ and the subsequent layers with indexes $\ell = 2, 3$ are trained with carefully chosen $\lambda_2, \lambda_3 = 10^3, 10^4$, respectively. It can be seen from Fig. \ref{fig:fig_sr_sim} that the successive refinement scheme has a small gap with respect to the separately trained baselines. This is due to the fact that the successive refinement module requires reconstruction at the $\ell$-th layer after receiving $\{\bm{b}_1, \ldots, \bm{b}_\ell\}$. The separately trained models on the other hand, reconstruct the model after all the bit sequence, $\bm{b}$ is received. A more stringent constraint results in a performance loss compared to the separately trained ones.


\begin{figure}[!t]
\centering
\includegraphics[width=0.8\columnwidth]{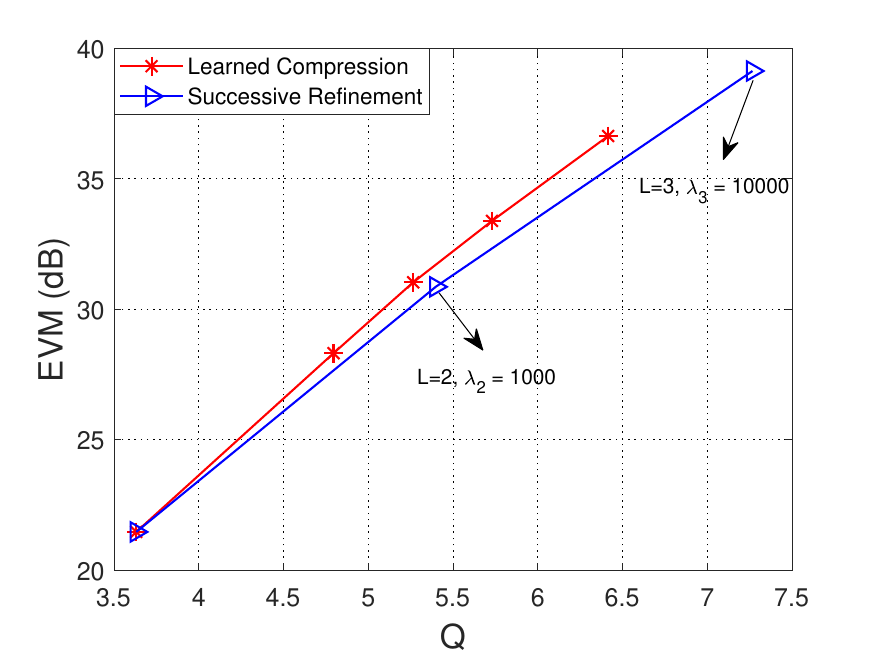}
\caption{The relative R-D performance between the successive refinement models and the separately trained ones. We consider $L = 3$ with $\lambda_\ell \in \{10^2, 10^3,10^4\}$ for the successive refinement model.}
\label{fig:fig_sr_sim}
\end{figure}

\begin{figure}[!t]
\centering
\includegraphics[width=0.8\columnwidth]{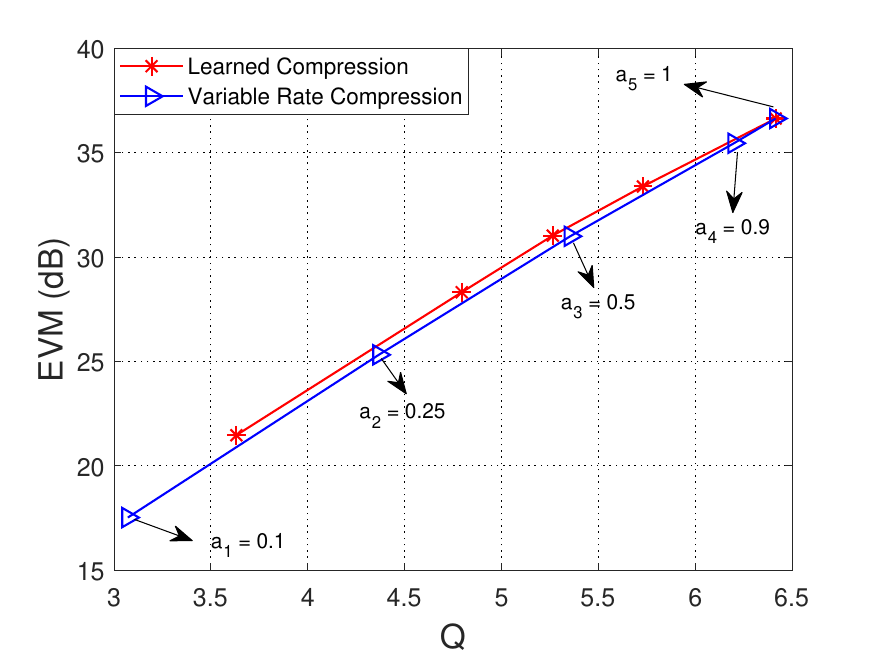}
\caption{The relative performance between the variable-rate compression model with shared non-linear transformation functions, $f_s^*(\cdot), g_d^*(\cdot)$ and the models trained with distinct $f_s(\cdot), g_d(\cdot)$.}
\label{fig:fig_vr_rdcurve}
\end{figure}

\begin{figure}[!t]
\centering
\includegraphics[width=0.8\columnwidth]{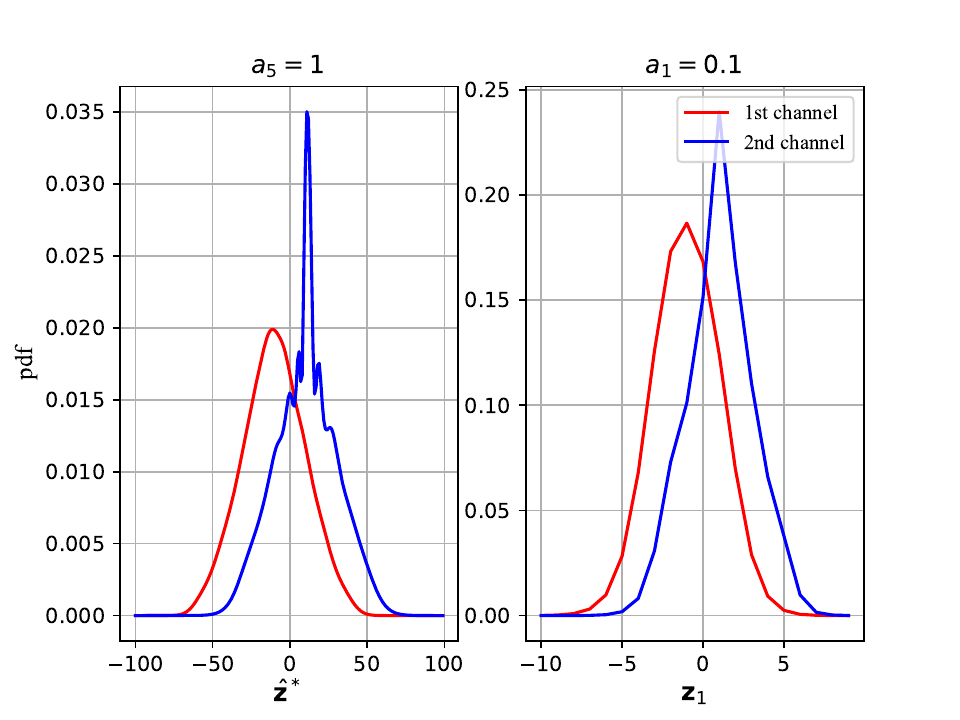}
\caption{We plot the entries of the codebook $\mathcal{C}_W$ as well as the codebook $\mathcal{C}_1$ corresponding to $a_W = 1$ and $a_1 = 0.1$, respectively. Both the codebooks have two rows/channels and there is a shrinkage in the support set for the $a_1 = 0.1$ case.}
\label{fig:fig_pdf_plot}
\end{figure}

\subsubsection{Variable-rate compression}
Finally, we manifest the effectiveness of the proposed variable-rate compression model. The uplink scenario with 7-tap noisy multi-path fading channel and $\mathrm{SNR} = 5$ dB is considered. The shared latent encoding and decoding functions, $f_s^*(\cdot), g_d^*(\cdot)$, are selected from the model trained with $\lambda = 5 \times 10^3$. $W = 5$ scaling factors are introduced, namely, $a_w \in \{0.1, 0.25, 0.5, 0.9, 1\}$. The R-D performance of the variable-rate compression model with respect to the separately trained models in Fig. \ref{fig:final_simu} (c) is shown in Fig. \ref{fig:fig_vr_rdcurve}.

As can be seen, the variable-rate compression model can achieve similar reconstruction performance with reduced storage cost compared to the separately trained models. 
In particular, the two curves coincide at $a_W = 1$ and a small gap of $Q \approx 0.1$ is observed when EVM equals to 25 dB. We then quantify the reduction in storage for the variable-rate compression model. For each of the separately trained model, the weights of the latent encoding and decoding functions, $f_s(\cdot)$ and $g_d(\cdot)$ are stored together with the corresponding codebook, $\mathcal{C}$. As a concrete example, approximately $2.5\times10^4$ floating-point parameters are introduced for the two-layer bi-directional GRU as well as the FC model shown in Fig. \ref{fig:fig_NN}. 
In contrary, the codebook $\mathcal{C}$ (or equivalently, $\mathcal{C}_W$), for the model with $\lambda = 5 \times 10^3$ is only comprised of $2\times200$ entries.  We note that each entry corresponds to a non-zero probability value and for each of the $V=2$ channels, all the corresponding entries summed up to be 1. Due to the fact that the size of the codebook is negelectable compared to the weights of the non-linear transformation functions, the proposed scheme reduces the storage cost to $1/W$ compared to the previous scheme.

We further provide the learned pdf, i.e., the entries of the codebook, $\mathcal{C}_w$, of the scaled latent vector, $\bm{Z}_w$. In particular, the entries of $\mathcal{C}_W$ and $\mathcal{C}_1$ are shown in  Fig. \ref{fig:fig_pdf_plot}. It is clear from the figure that the support set of $\mathcal{C}_1$ shrinks approximately $1/10$ compared to that of the $\mathcal{C}_W$. This matches with the analysis in Section \ref{sec:variable_rate}, that the entropy of $\bm{Z}_w$ reduces after scaling. The figure also shows that two different channels for the codebook follow two different distributions.

\section{Conclusion}
This paper proposed two AI-native fronthaul compression algorithms: latent quantization and neural compression, for efficient signal transmission over the CPRI link in C-RAN architectures. These methods extend traditional decimation and block scaling frameworks through learned non-linear transformations and latent source coding strategies, offering substantial improvements in rate-distortion performance.
{To address practical deployment scenarios in NextG networks, we further introduced two system-level enhancements: a successive refinement model for adaptive transmission over heterogeneous CPRI links and a variable-rate compression framework using shared neural components to reduce the storage on resource-constrained devices. 
Extensive evaluations confirmed that the proposed models consistently outperform conventional compression schemes in both uplink and downlink settings, and remain robust under mismatched channel conditions, SNRs, and modulation formats offering valuable implications for standardization efforts in 6G.
}

\appendices
\section{On the distribution of the time domain signal}\label{sec:APPA}
\begin{figure}[!t]
\centering
	\begin{subfigure}{0.7\linewidth}
		\centering
		\includegraphics[width=\linewidth]{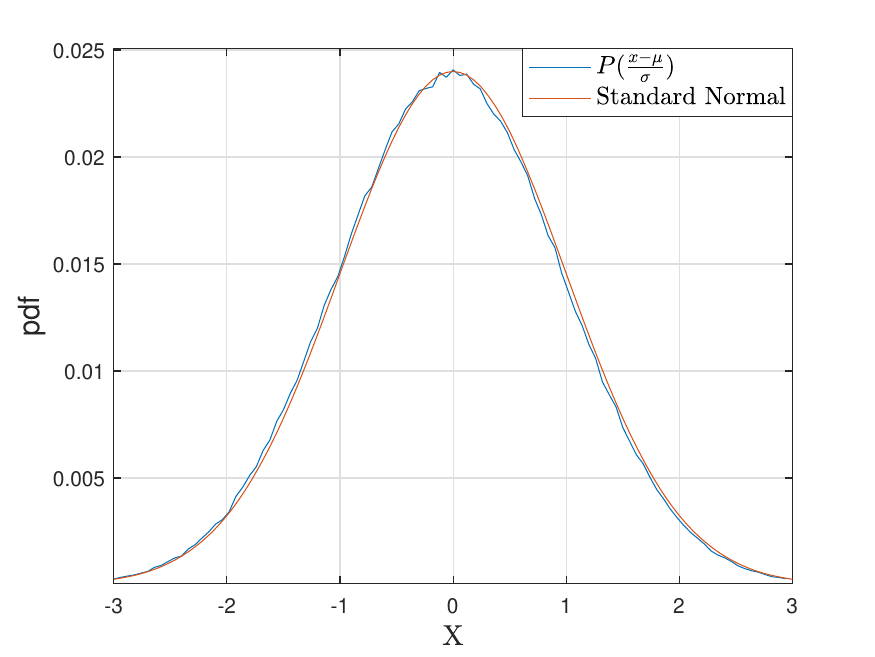}
		\caption{}
	\end{subfigure}

	\begin{subfigure}{0.7\linewidth}
		\centering
		\includegraphics[width=\linewidth]{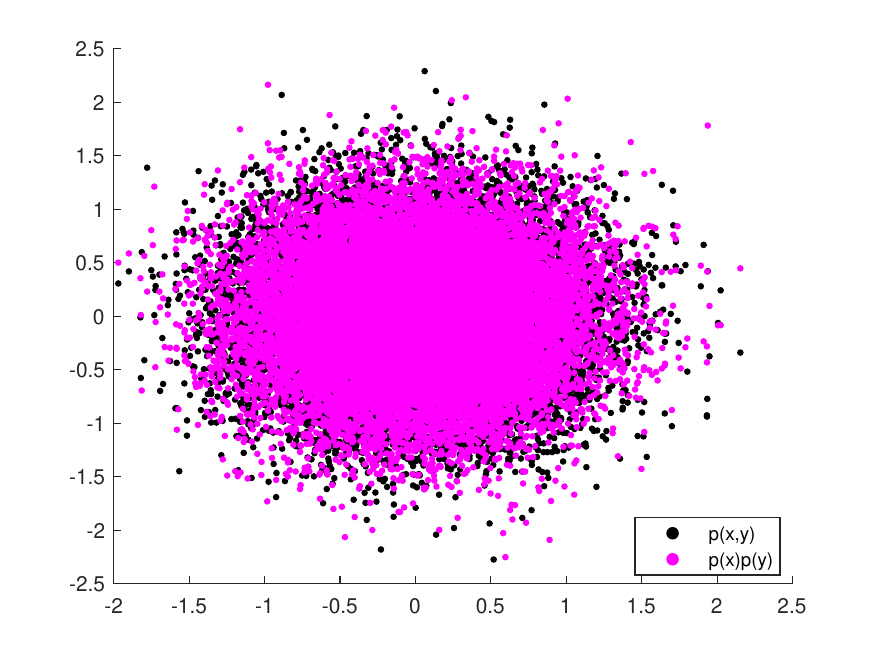}
		\caption{}
	\end{subfigure}

\caption{Visualization of the probability distribution for the random variables, $X$ and $Y$: (a) the marginal distribution of $\frac{X-\mu}{\sigma}$ deviates from the standard normal distribution; (b) different variables, $X$ and $Y$ are correlated which is verified by comparing the joint distribution $p(x, y)$ with the product of the marginal distributions, $p(x)p(y)$.}
\label{fig:fig_pdf_scatter}
\end{figure}

This appendix provides detailed derivation of the distribution of the decimation module output, $\bm{x}^\prime$, shown in Proposition \ref{prop:1} of Section \ref{sec:latent_quantize}.

We are interested in exploring whether the elements of the input signal, $\bm{s}$, which is the input to the transformation and source encoding modules, are i.i.d.. In the downlink case, $\bm{s}$ is obtained by first applying IFFT to $N_{fft}$ subcarriers where $N_{sym}$ of them are occupied by i.i.d. QAM symbols, denoted by $\bm{c}$, followed by decimation and block scaling.
We will show for the downlink case\footnote{It is easy to show the same property for the uplink scenario.} that the elements in $\bm{s}$ are non-i.i.d. which motivates the introduction of the non-linear transformation functions in Section \ref{sec:latent_quantize}. 

We first focus on the output signal of the decimation module, $\bm{x}^\prime$. Note that we carefully select the $K$ and $M$ values\footnote{We have the relation $\frac{N_{sym}}{N_{fft}} < \frac{K}{M}$.} and design a low-pass filter such that the spectrum of $\bm{x}^\prime$, denoted as $\bm{w}$, is still comprised of the same  QAM symbols, $\bm{c}$, but with a narrower guard band (in other words, less number of zero subcarriers) compared to the spectrum of $\bm{x}$. In particular, $\bm{w}$ can be expressed as:
\begin{equation}
    \bm{w} = [\bm{0}_{k^*}^\top, \bm{c}^\top, \bm{0}_{k^*}^\top]^\top,
\end{equation}
where\footnote{We assume $(N^\prime - N_{sym})$ is an even number which can be satisfied via zero padding of the QAM symbols, $\bm{c}$.} {$k^* = (N^\prime - N_{sym})/2$}. The $n$-th element of $\bm{x}^\prime$ can then be expressed as:
\begin{equation}
    {x}^\prime_n = 
\frac{1}{\sqrt{N^\prime}}\sum_{k=k^*}^{N^\prime-k^*-1}{w}_k e^{j \frac{2\pi kn}{N^\prime}}.
\label{eq:app_xn_prime}
\end{equation}
The expectation of ${x}^\prime_n$ is expressed as:
\begin{align}
    \mathbb{E}({x}^\prime_n) = \frac{1}{\sqrt{N^\prime}}\sum_{k=k^*}^{N^\prime-k^*-1}\mathbb{E}\Big({w}^r_k + j {w}^i_k\Big) e^{j\frac{2\pi nk}{N^\prime}}  = 0,
    \label{eq:eq_mu}
\end{align}
where ${w}^r_k$ and ${w}^i_k$ represent the real and imaginary parts of the QAM symbol, ${w}_k$, respectively. Since ${w}^r_k$ and $ {w}^i_k$ are i.i.d. and can be seen as pulse-amplitude modulation (PAM) symbols, we have $\mathbb{E}({w}^r_k) = \mathbb{E}({w}^i_k) = 0$. 
The covariance between ${x}^\prime_n$ and ${x}^\prime_m$ can be derived using \eqref{eq:eq_mu} and the fact that different ${w}_k$ elements are i.i.d.:
\begin{align}
    \text{Cov}({x}^\prime_n, {x}^\prime_m) &= \frac{1}{{N^\prime}}\mathbb{E}\sum_{k_1}\sum_{k_2}{w}_{k_1} e^{j\beta n k_1}{w}^\dagger_{k_2} e^{-j\beta m k_2} \notag \\
    &=\frac{P}{{N^\prime}}\sum_{k=k^*}^{N^\prime-k^*-1} e^{j\beta (n-m) k},
\end{align}
where we define $\beta \triangleq \frac{2\pi}{N^\prime}$, $\mathbb{E}(|w_k|^2) \triangleq P$ and utilize the fact that $\mathbb{E}(w_{k_1}^\dagger w_{k_2}) = P \delta_{k_1, k_2}$. 
It is easy to see when $k^* = 0$, the covariance would always equal to zero unless $m = n$. However, in practical settings, $k^*$ is non-negative to ensure that the low-pass filter preserves the signals located at the occupied subcarriers. Thus, the elements of $\bm{x}^\prime$ are correlated (non-i.i.d.) and the subsequent block scaling operation does not change the property. This finishes the proof of Proposition \ref{prop:1} validating that transformations are needed to achieve a better CPRI compression performance.

We further analyze the marginal distribution of the $n$-th element, ${x}^\prime_n$ and the joint distribution of two different elements, denoted as ${x}^\prime_n$ and ${x}^\prime_m$. We study the real part\footnote{The distribution of the imaginary part follows the same analysis.} of ${x}^\prime_n$, which is a random variable, denoted by $X$. According to the central limit theory, when $N_{sym}$ is large, the average of i.i.d. random variables, $w_k$ in \eqref{eq:app_xn_prime}, would converge to a normal distribution, $\mathcal{N}(\mu, \sigma^2)$. In the considered scenario, however, due to the weighted term, i.e., $e^{j\frac{2\pi nk}{N^\prime}}$, the distribution of  $X$ deviates slightly from the normal distribution which can be derived using the characteristic function.
Due to the page limit, we simply illustrate this by comparing the empirical distribution of $\frac{X-\mu}{\sigma}$, denoted as $P\Big(\frac{X-\mu}{\sigma}\Big)$ with the standard normal distribution in Fig. \ref{fig:fig_pdf_scatter} (a). Note that sufficient number of $X$ implementations are collected to estimate $P\Big(\frac{X-\mu}{\sigma}\Big)$.

We then illustrate the dependency of the real parts of the two random variables, ${x}^\prime_n$ and ${x}^\prime_m$, which we denote them as $X$ and $Y$. This is done by showing the difference between the joint distribution, denoted as $p(x,y)$ and the product of two marginal distributions, $p(x)p(y)$. We draw $2\times 10^4$ samples  from the two distributions which are as shown in Fig. \ref{fig:fig_pdf_scatter} (b). As can be seen, the samples drawn from the two marginal distributions, $p(x)p(y)$, are more concentrated to the origin compared with those drawn from the joint distribution, $p(x,y)$. This further validates the correlation among different elements.

\bibliographystyle{IEEEbib}
\bibliography{abs_ref}

\end{document}